\definecolor{orange}{RGB}{255,127,0}
\def\bra#1{\ensuremath{\mathinner{\langle{#1}|}}}
\def\ket#1{\ensuremath{\mathinner{|{#1}\rangle}}}
\newcommand{\braket}[2]{\langle #1|#2\rangle}
\newcommand{\tr}{\text{Tr}}
\newtheorem{theorem}{Theorem}
\newtheorem{definition}{Definition}
\newtheorem{lemma}{Lemma}
\newtheorem{remark}{Remark}
\newtheorem{proposition}{Proposition}
\DeclareMathOperator{\Tr}{Tr}
\DeclareMathOperator{\E}{\mathbb{E}}
\DeclareMathOperator{\Var}{\mathrm{Var}}
\newcommand{\norm}[1]{\left\lVert#1\right\rVert}
\algrenewcommand\alglinenumber[1]{\sf\scriptsize\color{blue}{#1}}
\algrenewcommand\algorithmicrequire{\textbf{Input:}}
\algrenewcommand\algorithmicensure{\textbf{Output:}}
\begin{document}

\title{Power of data in quantum machine learning}

\author{Hsin-Yuan Huang}
	\affiliation{Google Research, 340 Main Street, Venice, CA 90291, USA}
	\affiliation{Institute for Quantum Information and Matter, Caltech, Pasadena, CA, USA}
	\affiliation{Department of Computing and Mathematical Sciences, Caltech, Pasadena, CA, USA}
\author{Michael Broughton}
	\affiliation{Google Research, 340 Main Street, Venice, CA 90291, USA}
\author{Masoud Mohseni}	
	\affiliation{Google Research, 340 Main Street, Venice, CA 90291, USA}
\author{Ryan Babbush}	
	\affiliation{Google Research, 340 Main Street, Venice, CA 90291, USA}
\author{Sergio Boixo}	
	\affiliation{Google Research, 340 Main Street, Venice, CA 90291, USA}
\author{Hartmut Neven}	
	\affiliation{Google Research, 340 Main Street, Venice, CA 90291, USA}
\author{Jarrod R. McClean}
	\email{Corresponding author: jmcclean@google.com}
	\affiliation{Google Research, 340 Main Street, Venice, CA 90291, USA}
\date{\today}     

\begin{abstract}
The use of quantum computing for machine learning is among the most exciting prospective applications of quantum technologies.
However, machine learning tasks where data is provided can be considerably different than commonly studied computational tasks.
In this work, we show that some problems that are classically hard to compute can be easily predicted by classical machines learning from data. 
Using rigorous prediction error bounds as a foundation, we develop a methodology for assessing potential quantum advantage in learning tasks.
The bounds are tight asymptotically and empirically predictive for a wide range of learning models.
These constructions explain numerical results showing that with the help of data, classical machine learning models can be competitive with quantum models even if they are tailored to quantum problems.
We then propose a projected quantum model that provides a simple and rigorous quantum speed-up for a learning problem in the fault-tolerant regime. For near-term implementations, we demonstrate a significant prediction advantage over some classical models on engineered data sets designed to demonstrate a maximal quantum advantage in one of the largest numerical tests for gate-based quantum machine learning to date, up to 30 qubits.
\end{abstract}

\maketitle

\section*{Introduction}

As quantum technologies continue to rapidly advance, it becomes increasingly important to understand which applications can benefit from the power of these devices.  At the same time, machine learning on classical computers has made great strides, revolutionizing applications in image recognition, text translation, and even physics applications, with more computational power leading to ever increasing performance~\cite{halevy2009unreasonable}.  As such, if quantum computers could accelerate machine learning, the potential for impact is enormous.

\begin{figure*}[t!]
\centering
\includegraphics[width=0.83\textwidth]{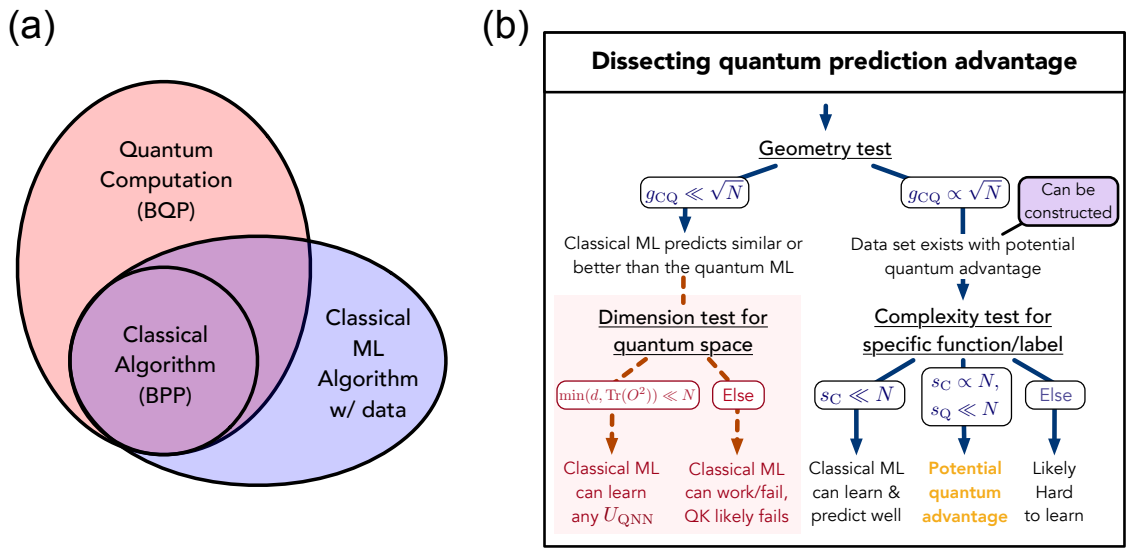}
    \caption{Illustration of the relation between complexity classes and a flowchart for understanding and pre-screening potential quantum advantage. (a) We cartoon the separation between problem complexities that are created by the addition of data to a problem.  Classical algorithms that can learn from data define a complexity class that can solve problems beyond classical computation (BPP), but it is still expected that quantum computation can efficiently solve problems that classical ML algorithm with data cannot. Rigorous definition and proof for the separation between classical algorithms that can learn from data and BPP / BQP is given in Appendix~\ref{sec:powerdatarigor}.
    (b) The flowchart we develop for understanding the potential for quantum prediction advantage.  $N$ samples of data from a potentially infinite depth QNN made with encoding and function circuits $U_{\text{enc}}$ and $U_{\mathrm{QNN}}$ are provided as input along with quantum and classical methods with associated kernels.  Tests are given as functions of $N$ to emphasize the role of data in the possibility of a prediction advantage.  One can first evaluate a geometric quantity $g_{\mathrm{CQ}}$ that measures the possibility of an advantageous quantum/classical prediction separation without yet considering the actual function to learn.  We show how one can efficiently construct an adversarial function that saturates this limit if the test is passed, otherwise the classical approach is guaranteed to match performance for any function of the data.  To subsequently consider the actual function provided, a label/function specific test may be run using the model complexities $s_C$ and $s_Q$.  If one specifically uses the quantum kernel (QK) method, the red dashed arrows can evaluate if all possible choices of $U_{\mathrm{QNN}}$ lead to an easy classical function for the chosen encoding of the data. 
    \label{fig:FlowChart}
    }
\end{figure*}

At least two paths towards quantum enhancement of machine learning have been considered.  First, motivated by quantum applications in optimization~\cite{Grover:1996,Durr1996Quantum,farhi2001quantum}, the power of quantum computing could, in principle, be used to help improve the training process of existing classical models~\cite{neven2009training,Rebentrost2014}, or enhance inference in graphical models \cite{leifer2008quantum}.  This could include finding better optima in a training landscape or finding optima with fewer queries.  However, without more structure known in the problem, the advantage along these lines may be limited to quadratic or small polynomial speedups \cite{aaronson2009need,mcclean2020low}.

The second vein of interest is the possibility of using quantum models to generate correlations between variables that are inefficient to represent through classical computation. The recent success both theoretically and experimentally for demonstrating quantum computations beyond classical tractability can be taken as evidence that quantum computers can sample from probability distributions that are exponentially difficult to sample from classically~\cite{boixo2018characterizing,arute2019quantum}.  If these distributions were to coincide with real-world distributions, this would suggest the potential for significant advantage.  This is typically the type of advantage that has been sought in recent work on both quantum neural networks~\cite{peruzzo2014variational,mcclean2016theory,farhi2018classification}, which seek to parameterize a distribution through some set of adjustable parameters, and quantum kernel methods~\cite{havlivcek2019supervised} that use quantum computers to define a feature map that maps classical data into the quantum Hilbert space.
The justification for the capability of these methods to exceed classical models often follows similar lines as Refs~\cite{boixo2018characterizing,arute2019quantum} or quantum simulation results.  That is, if the model leverages a quantum circuit that is hard to sample results from classically, then there is potential for a quantum advantage.

In this work, we show quantitatively how this picture is incomplete in machine learning (ML) problems where some training data is provided. The provided data can elevate classical models to rival quantum models, even when the quantum circuits generating the data are hard to compute classically.
We begin with a motivating example and complexity-theoretic argument showing how classical algorithms with data can match quantum output.
Following this, we provide rigorous prediction error bounds for training classical and quantum ML methods based on kernel functions \cite{cortes1995support, scholkopf2002learning, mohri2018foundations, jacot2018neural, novak2019neural, arora2019exact, havlivcek2019supervised, blank2020quantum, bartkiewicz2020experimental, liu2020rigorous} to learn quantum mechanical models. We focus on kernel methods, as they not only provide provable guarantees, but are also very flexible in the functions they can learn. For example, recent advancements in theoretical machine learning show that training neural networks with large hidden layers is equivalent to training an ML model with a particular kernel, known as the neural tangent kernel \cite{jacot2018neural, novak2019neural, arora2019exact}. Throughout, when we refer to classical ML models related to our theoretical developments, we will be referring to ML models that can be easily associated with a kernel, either explicitly as in kernel methods, or implicitly as in the neural tangent kernels.  However, in the numerical section, we will also include performance comparisons to methods where direct  association of a kernel is challenging, such as random forest methods. In the quantum case, we will also show how quantum ML based on kernels can be made equivalent to training an infinite depth quantum neural network.

We use our prediction error bounds to devise a flowchart for testing potential quantum prediction advantage, the separation between prediction errors of quantum and classical ML models for a fixed amount of training data.  
The most important test is a geometric difference between kernel functions defined by classical and quantum ML. Formally, the geometric difference is defined by the closest efficient classical ML model. In practice, one should consider the geometric difference with respect to a suite of optimized classical ML models. If the geometric difference is small, then a classical ML method is guaranteed to provide similar or better performance in prediction on the data set, independent of the function values or labels.  Hence this represents a powerful, function independent pre-screening that allows one to evaluate if there is any possibility of better performance.
On the other hand, if the geometry differs greatly, we show both the existence of a data set that exhibits large prediction advantage using the quantum ML model and how one can construct it efficiently.  While the tools we develop could be used to compare and construct hard classical models like hash functions, we enforce restrictions that allow us to say something about a quantum separation.  In particular, the feature map will be white box, in that a quantum circuit specification is available for the ideal feature map, and that feature map can be made computationally hard to evaluate classically.  A constructive example of this is a discrete log feature map, where a provable separation for our kernel is given in Appendix~\ref{app:qadv}.  Additionally, the minimum over classical models means that classical hash functions are reproduced formally by definition.

Moreover, application of these tools to existing models in the literature rules many of them out immediately, providing a powerful sieve for focusing development of new data encodings.
Following these constructions, in numerical experiments, we find that a variety of common quantum models in the literature perform similarly or worse than classical ML on both classical and quantum data sets due to a small geometric difference.
The small geometric difference is a consequence of the exponentially large Hilbert space employed by existing quantum models, where all inputs are too far apart.
To circumvent the setback, we propose an improvement, which enlarges the geometric difference by projecting quantum states embedded from classical data back to approximate classical representation \cite{Huang2020, cotler2020quantum, paini2019approximate}.
With the large geometric difference endowed by the projected quantum model, we are able to construct engineered data sets to demonstrate large prediction advantage over \emph{common} classical ML models in numerical experiments up to $30$ qubits. Despite our constructions being based on methods with associated kernels, we find empirically that the prediction advantage remains robust across tested classical methods, including those without an easily determined kernel. This opens the possibility to use a small quantum computer to generate efficiently verifiable machine learning problems that could be challenging for classical ML models.

\vspace{-1em}
\section*{Results}
\subsection{Setup and motivating example}
We begin by setting up the problems and methods of interest for classical and quantum models, and then provide a simple motivating example for studying how data can increase the power of classical models on quantum data.  The focus will be a supervised learning task with a collection of $N$ training examples $\{(x_i, y_i)\}$, where $x_i$ is the input data and $y_i$ is an associated label or value. We assume that $x_i$ are sampled independently from a data distribution $\mathcal{D}$.

In our theoretical analysis, we will consider $y_i \in \mathbb{R}$ to be generated by some quantum model.
In particular, we consider a continuous encoding unitary that maps classical vector $x_i$ into quantum state $\ket{x_i} = U_{\text{enc}}(x_i)\ket{0}^{\otimes n}$ and refer to the corresponding density matrix as $\rho(x_i)$.  The expressive power of these embeddings have been investigated from a functional analysis point of view~\cite{lloyd2020quantum,schuld2020effect}, however the setting where data is provided requires special attention. 
The encoding unitary is followed by a unitary $U_{\text{QNN}}(\theta)$.
We then measure an observable $O$ after the quantum neural network.
This produces the label/value for input $x_i$ given as $y_i = f(x_i) = \bra{x_i} U_{\text{QNN}}^\dagger O U_{\text{QNN}} \ket{x_i}$.
The quantum model considered here is also referred to as a quantum neural network (QNN) in the literature \cite{farhi2018classification, mcclean2018barren}.
The goal is to understand when it is easy to predict the function $f(x)$ by training classical/quantum machine learning models.

With notation in place, we turn to a simple motivating example to understand how the availability of data in machine learning tasks can change computational hardness.
Consider data points $\{x_i\}_{i=1}^N$ that are $p$-dimensional classical vectors with $\norm{x_i}_2 = 1$, and use amplitude encoding \cite{grant2019initialization, schuld2020circuit, larose2020robust} to encode the data into an $n$-qubit state $\ket{x_i} = \sum_{k=1}^p x_{i}^k \ket{k}$.
If $U_{\text{QNN}}$ is a time-evolution under a many-body Hamiltonian, then the function $f(x) = \bra{x} U_{\text{QNN}}^\dagger O U_{\text{QNN}} \ket{x}$ is in general hard to compute classically~\cite{harrow2017quantum}
, even for a single input state.  In particular, we have the following proposition showing that if a classical algorithm can compute $f(x)$ efficiently, then quantum computers will be no more powerful than classical computers; see Appendix~\ref{app:motivating-example} for a proof.
\begin{proposition}\label{prop:BQP=BPP}
If a classical algorithm without training data can compute $f(x)$ efficiently for any $U_{\text{QNN}}$ and $O$, then BPP=BQP.
\end{proposition}
Nevertheless, it is incorrect to conclude that training a classical model from data to learn this evolution is hard.
To see this, we write out the expectation value as
\begin{align}
    f(x_i) &= \left( \sum_{k=1}^p x_{i}^{k*} \bra{k} \right) U_{\text{QNN}}^\dagger O U_{\text{QNN}} \left( \sum_{l=1}^p x_{i}^l \ket{l} \right) \notag \\
    &=  \sum_{k=1}^p \sum_{l=1}^p B_{kl} x_{i}^{k*} x_{i}^l,
\end{align}
which is a quadratic function with $p^2$ coefficients $B_{kl}=\bra{k}U_{\text{QNN}}^\dagger O U_{\text{QNN}}\ket{l}$.
Using the theory developed later in this work, we can show that, for any $U_{\text{QNN}}$ and $O$, training a specific classical ML model on a collection of $N$ training examples $\{(x_i, y_i = f(x_i))\}$ would give rise to a prediction model $h(x)$ with
\begin{equation}
    \E_{x \sim \mathcal{D}} |h(x) - f(x)| \leq c \sqrt{\frac{p^2}{N}},
\end{equation}
for a constant $c > 0$. We refer to Appendix~\ref{app:motivating-example} for the proof of this result.
Hence, with $N \propto p^2 / \epsilon^2$ training data, one can train a classical ML model to predict the function $f(x)$ up to an additive prediction error $\epsilon$. This elevation of classical models through some training samples is illustrative of the power of data.
In Appendix~\ref{sec:powerdatarigor}, we give a rigorous complexity-theoretic argument on the computational power provided by data. A cartoon depiction of the complexity separation induced by data is provided in Fig.~\ref{fig:FlowChart}(a).

While this simple example makes the basic point that sufficient data can change complexity considerations, it perhaps opens more questions than it answers.  For example, it uses a rather weak encoding into amplitudes and assumes one has access to an amount of data that is on par with the dimension of the model.  The more interesting cases occur if we strengthen the data encoding, include modern classical ML models, and consider number of data $N$ much less than the dimension of the model.  These more interesting cases are the ones we quantitatively answer.

Our primary interest will be ML algorithms that are much stronger than fitting a quadratic function and the input data is provided in more interesting ways than an amplitude encoding.
In this work, we focus on both classical and quantum ML models based on kernel functions $k(x_i, x_j)$.
At a high level, a kernel function can be seen as a measure of similarity, if $k(x_i, x_j)$ is large when $x_i$ and $x_j$ are close.
When considered for finite input data, a kernel function may be represented as a matrix $K_{ij} = k(x_i, x_j)$ and the conditions required for kernel methods are satisfied when the matrix representation is Hermitian and positive semi-definite. 
\begin{figure}[t]
\centering
\includegraphics[width=1.0\columnwidth]{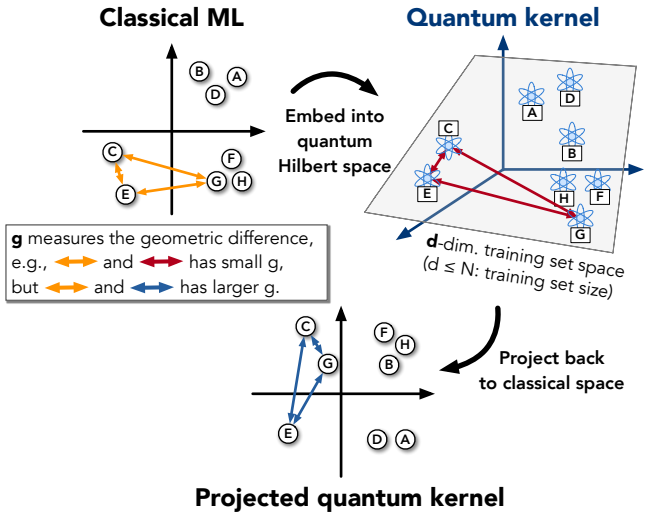}
    \caption{ Cartoon of the geometry (kernel function) defined by classical and quantum ML models. The letters A, B, ... represent data points $\{x_i\}$ in different spaces with arrows representing the similarity measure (kernel function) between data.  The geometric difference $g$ is a difference between similarity measures (arrows) in different ML models and $d$ is an effective dimension of the data set in the quantum Hilbert space.
    \label{fig:geometry}}
\end{figure}

A given kernel function corresponds to a nonlinear feature mapping $\phi(x)$ that maps $x$ to a possibly infinite-dimensional feature space, such that $k(x_i, x_j) = \phi(x_i)^\dagger \phi(x_j)$. This is the basis of the so-called ``kernel trick'' where intricate and powerful maps $\phi(x_i)$ can be implemented through the evaluation of relatively simple kernel functions $k$.  As a simple case, in the example above, using a kernel of $k(x_i, x_j)=|\braket{x_i}{x_j}|^2$ corresponds to a feature map $\phi(x_i) = \sum_{kl} x_i^{k*} x_i^l \ket{k} \otimes \ket{l}$ which is capable of learning quadratic functions in the amplitudes.  In kernel based ML algorithms, the trained model can always be written as $h(x) = w^\dagger \phi(x)$ where $w$ is a vector in the feature space defined by the kernel.
For example, training a convolutional neural network with large hidden layers \cite{jacot2018neural,li2019enhanced} is equivalent to using a corresponding neural tangent kernel $k^{\text{CNN}}$. The feature map $\phi^{\text{CNN}}$ for the kernel $k^{\text{CNN}}$ is a nonlinear mapping that extracts all local properties of $x$ \cite{li2019enhanced}.
In quantum mechanics, similarly a kernel function can be defined using the native geometry of the quantum state space $\ket{x}$. For example, we can define the kernel function as $\braket{x_i}{x_j}$ or $|\braket{x_i}{x_j}|^2$.
Using the output from this kernel in a method like a classical support vector machine \cite{cortes1995support} defines the quantum kernel method. 

A wide class of functions can be learned with a sufficiently large amount of data by using the right kernel function $k$. 
For example, in contrast to the perhaps more natural kernel, $\braket{x_i}{x_j}$, the quantum kernel $k^{\mathrm{Q}}(x_i, x_j) = |\braket{x_i}{x_j}|^2 = \tr(\rho(x_i) \rho(x_j))$ can learn arbitrarily deep quantum neural network $U_{\text{QNN}}$ that measures any observable $O$ (shown in Appendix~\ref{sec:qk_qnn}), and the Gaussian kernel, $k^\gamma(x_i, x_j) = \exp(-\gamma ||x_i - x_j||^2)$ with hyper-parameter $\gamma$, can learn any continuous function in a compact space \cite{micchelli2006universal}, which includes learning any QNN.
Nevertheless, the required amount of data $N$ to achieve a small prediction error could be very large in the worst case.
Although we will work with other kernels defined through a quantum space, due both to this expressive property and terminology of past work, we will refer to $k^{\mathrm{Q}}(x_i, x_j) = \tr\left[ \rho(x_i) \rho(x_j) \right]$ as the quantum kernel method throughout this work, which is also the definition given in \cite{havlivcek2019supervised}.  

\subsection{Testing quantum advantage} \label{sec:tests} 
We now construct our more general framework for assessing the potential for quantum prediction advantage in a machine learning task.  Beginning from a general result, we build both intuition and practical tests based on the geometry of the learning spaces.  This framework is summarized in Fig.~\ref{fig:FlowChart}.

Our foundation is a general prediction error bound for training classical/quantum ML models to predict some quantum model defined by $f(x) = \Tr( O^U \rho(x))$ derived from concentration inequalities, where $O^U = U_{\text{QNN}}^\dagger O U_{\text{QNN}}$.
Suppose we have obtained $N$ training examples $\{(x_i, y_i = f(x_i))\}$.
After training on this data, there exists an ML algorithm that outputs $h(x) = w^\dagger \phi(x)$ using kernel $k(x_i, x_j) = K_{ij} = \phi(x_i)^\dagger \phi(x_j)$ which has a simplified prediction error bounded by
\begin{align}
    \mathbb{E}_{x \sim \mathcal{D}}|h(x) - f(x)| & \leq c \sqrt{\frac{s_K(N)}{N}} \label{eq:main_pred}
\end{align}
for a constant $c > 0$ and $N$ independent samples from the data distribution $\mathcal{D}$. We note here that this and all subsequent bounds have a key dependence on the quantity of data $N$, reflecting the role of data to improve prediction performance.  Due to a scaling freedom between $\alpha \phi(x)$ and $w / \alpha$, we have assumed $\sum_{i=1}^N \phi(x_i)^\dagger\phi(x_i) = \Tr(K) = N$. A derivation of this result is given in Appendix~\ref{sec:proofmain}.

Given this core prediction error bound, we now seek to understand its implications.
The main quantity that determines the prediction error is
\begin{equation}
    s_K(N) = \sum_{i=1}^N \sum_{j=1}^N (K^{-1})_{ij} \Tr( O^U \rho(x_i)) \Tr( O^U \rho(x_j)).
\end{equation}
The quantity $s_K(N)$ is equal to the model complexity of the trained function $h(x) = w^\dagger \phi(x)$, where $s_K(N) = \norm{w}^2 = w^\dagger w$ after training. A smaller value of $s_K(N)$ implies better generalization to new data $x$ sampled from the distribution $\mathcal{D}$.
Intuitively, $s_K(N)$ measures whether the closeness between $x_i, x_j$ defined by the kernel function $k(x_i, x_j)$ matches well with the closeness of the observable expectation for the quantum states $\rho(x_i), \rho(x_j)$, recalling that a larger kernel value indicates two points are closer.
The computation of $s_K(N)$ can be performed efficiently on a classical computer by inverting an $N\times N$ matrix $K$ after obtaining the $N$ values $\Tr(O^U \rho(x_i))$ by performing order $N$ experiments on a physical quantum device. The time complexity scales at most as order $N^3$.
Due to the connection between $w^\dagger w$ and the model complexity, a regularization term $w^\dagger w$ is often added to the optimization problem during the training of $h(x) = w^\dagger \phi(x)$, see e.g., \cite{krogh1992simple, cortes1995support, suykens1999least}. Regularization prevents $s_K(N)$ from becoming too large at the expense of not completely fitting the training data. A detailed discussion and proof under regularization is given in Appendix~\ref{sec:proofmain}~and~\ref{app:sdgdetail}.

The prediction error upper bound can often be shown to be asymptotically tight by proving a matching lower bound. As an example, when $k(x_i, x_j)$ is the quantum kernel $\Tr(\rho(x_i) \rho(x_j))$, we can deduce that $s_K(N) \leq \Tr(O^2)$ hence one would need a number of data $N$ scaling as $\Tr(O^2)$. In Appendix~\ref{app:lowerbound}, we give a matching lower bound showing that a scaling of $\Tr(O^2)$ is unavoidable if we assume a large Hilbert space dimension. This lower bound holds for any learning algorithm and not only for quantum kernel methods. The lower bound proof uses mutual information analysis and could easily extend to other kernels. This proof strategy is also employed extensively in a follow-up work \cite{huang2021informationtheoretic} to devise upper and lower bounds for classical and quantum ML in learning quantum models. Furthermore, not only are the bounds asymptotically tight, in numerical experiments given in Appendix~\ref{app:additional} we find that the prediction error bound also captures the performance of other classical ML models not based on kernels where the constant factors are observed to be quite modest.

Given some set of data, if $s_K(N)$ is found to be small relative to ${N}$ after training for a classical ML model, this quantum model $f(x)$ can be predicted accurately even if $f(x)$ is hard to compute classically for any given $x$. In order to formally evaluate the potential for quantum prediction advantage generally, one must take $s_K(N)$ to be the minimal over efficient classical models. However, we will be more focused on minimally attainable values over a reasonable set of classical methods with tuned hyperparameters.  This prescribes an effective method for evaluating potential quantum advantage in practice, and already rules out a considerable number of examples from the literature.

From the bound, we can see that the potential advantage for one ML algorithm defined by $K^1$ to predict better than another ML algorithm defined by $K^2$ depends on the largest possible separation between $s_{K^1}$ and $s_{K^2}$ for a data set.
The separation can be characterized by defining an asymmetric geometric difference that depends on the dataset, but is independent of the function values or labels.  Hence evaluating this quantity is a good first step in understanding if there is a potential for quantum advantage, as shown in Fig.~\ref{fig:FlowChart}.  This quantity is defined by
\begin{align}
   g_{12} =  g(K^1 || K^2) = \sqrt{\norm{\sqrt{K^2} (K^1)^{-1} \sqrt{K^2}}_\infty},
\end{align}
where $\norm{.}_\infty$ is the spectral norm of the resulting matrix and we assume $\Tr(K^1) = \Tr(K^2) = N$. One can show that
$s_{K^1} \leq g_{12}^2 s_{K^2}$, which implies the prediction error bound $c \sqrt{s_{K^1} / N} \leq c g_{12} \sqrt{s_{K^2} / N}$.
A detailed derivation is given in Appendix~\ref{app:geo-detail} and an illustration of $g_{12}$ can be found in Fig.~\ref{fig:geometry}.
The geometric difference $g(K^1 || K^2)$ can be computed on a classical computer by performing a singular value decomposition of the $N\times N$ matrices $K^1$ and $K^2$. Standard numerical analysis packages \cite{LAPACK} provide highly efficient computation of a singular value decomposition in time at most order $N^3$.
Intuitively, if $K^1(x_i, x_j)$ is small/large when $K^2(x_i, x_j)$ is small/large, then the geometric difference $g_{12}$ is a small value $\sim 1$, where $g_{12}$ grows as the kernels deviate.

To see more explicitly how the geometric difference allows one to make statements about the possibility for one ML model to make different predictions from another,  consider the geometric difference $g_{\mathrm{CQ}} = g(K^{\mathrm{C}} || K^{\mathrm{Q}})$ between a classical ML model with kernel $k^{\mathrm{C}}(x_i, x_j)$ and a quantum ML model, e.g., with $k^{\mathrm{Q}}(x_i, x_j) = \Tr(\rho(x_i) \rho(x_j))$.  If $g_{\mathrm{CQ}}$ is small, because
\begin{equation}
s_{\mathrm{C}} \leq g_{\mathrm{CQ}}^2 s_{\mathrm{Q}},    
\end{equation}
the classical ML model will always have a similar or better model complexity $s_K(N)$ compared to the quantum ML model.
This implies that the prediction performance for the classical ML will likely be competitive or better than the quantum ML model, and one is likely to prefer using the classical model. This is captured in the first step of our flowchart in Fig.~\ref{fig:FlowChart}.

In contrast, if $g_{\mathrm{CQ}}$ is large we show that there exists a data set with $s_{\mathrm{C}} = g_{\mathrm{CQ}}^2 s_{\mathrm{Q}}$ with the quantum model exhibiting superior prediction performance. An efficient method to explicitly construct such a maximally divergent data set is given in Appendix~\ref{app:engdata} and a numerical demonstration of the stability of this separation is provided in the next section.
While a formal statement about classical methods generally requires defining it over all efficient classical methods, in practice, we consider $g_{\mathrm{CQ}}$ to be the minimum geometric difference among a suite of optimized classical ML models.  Our engineered approach minimizes this value as a hyperparameter search to find the best classical adversary, and shows remarkable robustness across classical methods including those without an associated kernel, such as random forests \cite{breiman2001random}.

\begin{figure*}[t]
\centering
\includegraphics[width=1.0\textwidth]{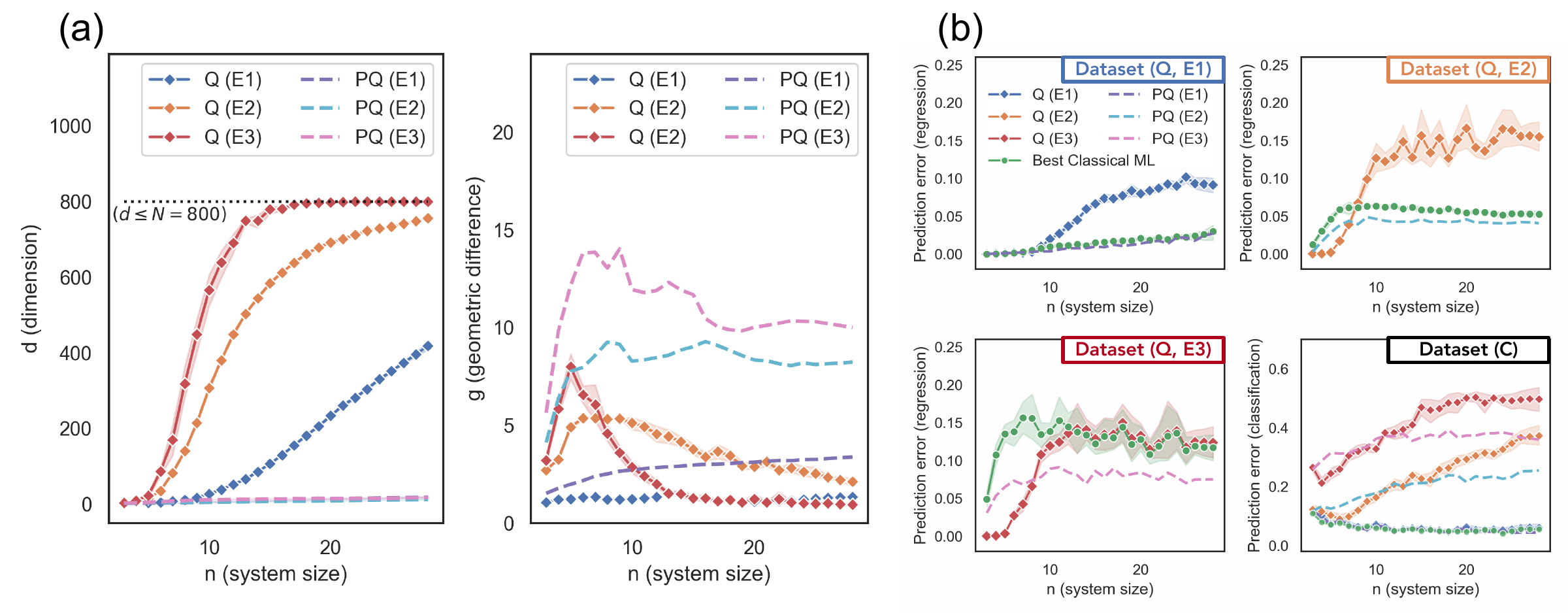}
    \caption{Relation between dimension $d$, geometric difference $g$, and prediction performance. The shaded regions are the standard deviation over $10$ independent runs and $n$ is the number of qubits in the quantum encoding and dimension of the input for the classical encoding.
    (a)~The approximate dimension $d$ and the geometric difference $g$ with classical ML models for quantum kernel (Q) and projected quantum kernel (PQ) under different embeddings and system sizes $n$.
    (b)~Prediction error (lower is better) of the quantum kernel method (Q), projected quantum kernel method (PQ), and classical ML models on classical~(C) and quantum~(Q) data sets with number of data $N = 600$. As $d$ grows too large, the geometric difference $g$ for quantum kernel becomes small. We see that small geometric difference $g$ always results in classical ML being competitive or outperforming the quantum ML model. When $g$ is large, there is a potential for improvement over classical ML. For example, projected quantum kernel improves upon the best classical ML in Dataset (Q, E3). \label{fig:DimensionAccuracy}}
\end{figure*}

In the specific case of the quantum kernel method with $K^Q_{ij} = k^{\mathrm{Q}}(x_i, x_j) = \Tr(\rho(x_i) \rho(x_j))$, we can gain additional insights into the model complexity $s_K$, and sometimes make conclusions about classically learnability for all possible $U_{\mathrm{QNN}}$ for the given encoding of the data. Let us define $\mathrm{vec}(X)$ for a Hermitian matrix $X$ to be a vector containing the real and imaginary part of each entry in $X$.
In this case, we find $s_{Q} = \mathrm{vec}(O^U)^T P_Q \mathrm{vec}(O^U)$, where $P_Q$ is the projector onto the subspace formed by $\{\mathrm{vec}(\rho(x_1)), \ldots, \mathrm{vec}(\rho(x_N))\}$. 
We highlight
\begin{align}
    d = \text{dim}(P_Q) = \text{rank}(K^{\mathrm{Q}}) \leq N,
\end{align}
which defines the effective dimension of the quantum state space spanned by the training data.
An illustration of the dimension $d$ can be found in Fig.~\ref{fig:FlowChart}.
Because $P_Q$ is a projector and has eigenvalues $0$~or~$1$,
$s_{\mathrm{Q}} \leq \min(d, \mathrm{vec}(O^U)^T \mathrm{vec}(O^U)) = \min(d, \Tr(O^2))$ assuming $\norm{O}_\infty \leq 1$.
Hence in the case of the quantum kernel method, the prediction error bound may be written as
\begin{equation} \label{eq:QKprederr}
    \mathbb{E}_{x \in \mathcal{D}} |h(x) - f(x)| \leq c \sqrt{\frac{\min(d, \Tr(O^2))}{N}}.
\end{equation}
A detailed derivation is given in Appendix~\ref{sec:QKmethodanalysis}.
We can also consider the approximate dimension $d$, where small eigenvalues in $K^{\mathrm{Q}}$ are truncated, by incurring a small training error.
After obtaining $K^{\mathrm{Q}}$ from a quantum device, the dimension $d$ can be computed efficiently on a classical machine by performing a singular value decomposition on the $N \times N$ matrix $K^{\mathrm{Q}}$. Estimation of $\Tr(O^2)$ can be performed by sampling random states $\ket{\psi}$ from a quantum $2$-design, measuring $O$ on $\ket{\psi}$, and performing statistical analysis on the measurement data \cite{Huang2020}.
This prediction error bound shows that a quantum kernel method can learn any $U_{\mathrm{QNN}}$ when the dimension of the training set space $d$ or the squared Frobenius norm of observable $\Tr(O^2)$ is much smaller than the amount of data $N$.
In Appendix~\ref{app:lowerbound}, we show that quantum kernel methods are optimal for learning quantum models with bounded $\Tr(O^2)$ as they saturate the fundamental lower bound.
However, in practice, most observables, such as Pauli operators, will have exponentially large $\Tr(O^2)$, so the central quantity is the dimension $d$. Using the prediction error bound for the quantum kernel method, if both $g_{\mathrm{CQ}}$ and $\min(d, \Tr(O^2))$ are small, then a classical ML would also be able to learn any $U_{\mathrm{QNN}}$.  In such a case, one must conclude that the given encoding of the data is classically easy, and this cannot be affected by an arbitrarily deep $U_{\mathrm{QNN}}$.
This constitutes the bottom left part of our flowchart in Fig.~\ref{fig:FlowChart}.

Ultimately, to see a prediction advantage in a particular data set with specific function values/labels, we need a large separation between $s_{\mathrm{C}}$ and $s_{\mathrm{Q}}$. This happens when the inputs $x_i, x_j$ considered close in a quantum ML model are actually close in the target function $f(x)$, but are far in classical ML.  This is represented as the final test in Fig.~\ref{fig:FlowChart} and the methodology here outlines how this result can be achieved in terms of its more essential components.

\subsection{Projected quantum kernels}

In addition to analyzing existing quantum models, the analysis approach introduced also provides suggestions for new quantum models with improved properties, which we now address here.  For example, if we start with the original quantum kernel, when the effective dimension $d$ is large, kernel $\Tr(\rho(x_i) \rho(x_j))$, which is based on a fidelity-type metric, will regard all data to be far from each other and the kernel matrix $K^{\mathrm{Q}}$ will be close to identity. This results in a small geometric difference $g_{\mathrm{CQ}}$ leading to classical ML models being competitive or outperforming the quantum kernel method. In Appendix~\ref{app:limitqk}, we present a simple quantum model that requires an exponential amount of samples to learn using the quantum kernel $\Tr(\rho(x_i) \rho(x_j))$, but only needs a linear number of samples to learn using a classical ML model.

To circumvent this setback, we propose a family of projected quantum kernels as a solution.
These kernels work by projecting the quantum states to an approximate classical representation, e.g., using reduced physical observables or classical shadows \cite{gosset2018compressed, aaronson2020shadow, aaronson2019gentle, paini2019approximate, Huang2020}.
Even if the training set space has a large dimension $d \sim N$, the projection allows us to reduce to a low-dimensional classical space that can generalize better. Furthermore, by going through the exponentially large quantum Hilbert space, the projected quantum kernel can be challenging to evaluate without a quantum computer. In numerical experiments, we find that the classical projection increases rather than decreases the geometric difference with classical ML models. These constructions will be the foundation of our best performing quantum method later.

One of the simplest forms of projected quantum kernel is to measure the one-particle reduced density matrix (1-RDM) on all qubits for the encoded state, $\rho_k(x_i) = \tr_{j \neq k}[\rho(x_i)]$, then define the kernel as
\begin{align}
    k^{\text{PQ}}(x_i, x_j) = \exp \left(-\gamma  \sum_k \norm{\rho_k(x_i) - \rho_k(x_j)}_F^2 \right).
\end{align}
This kernel defines a feature map function in the 1-RDM space that is capable of expressing arbitrary functions of powers of the 1-RDMs of the quantum state.  From non-intuitive results in density functional theory, we know even one body densities can be sufficient for determining exact ground state~\cite{hohenberg1964inhomogeneous} and time-dependent~\cite{runge1984density} properties of many-body systems under modest assumptions.
In Appendix~\ref{app:pqk}, we provide examples of other projected quantum kernels. This includes an efficient method for computing a kernel function that contains all orders of RDMs using local randomized measurements and the formalism of classical shadows \cite{Huang2020}. The classical shadow formalism allows efficient construction of RDMs from very few measurements. In Appendix~\ref{app:qadv}, we show that projected versions of quantum kernels lead to a simple and rigorous quantum speed-up in a recently proposed learning problem based on discrete logarithms \cite{liu2020rigorous}.

\subsection{Numerical studies}

\begin{figure*}[t]
\centering
\includegraphics[width=0.86\textwidth]{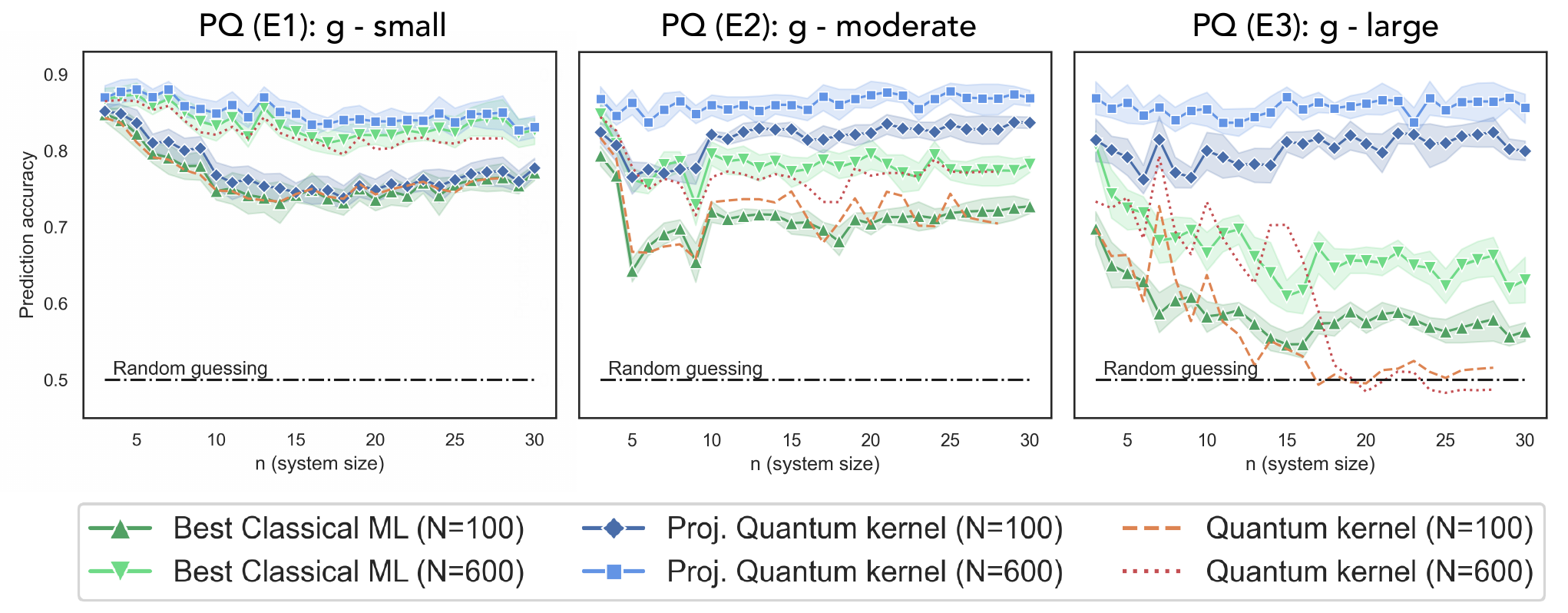}
    \caption{Prediction accuracy (higher the better) on engineered data sets. A label function is engineered to match the geometric difference $g(\mathrm{C} || \mathrm{PQ})$ between projected quantum kernel and classical approaches, demonstrating a significant gap between quantum and the best classical models up to 30 qubits when $g$ is large. We consider the best performing classical ML models among Gaussian SVM, linear SVM, Adaboost, random forest, neural networks, and gradient boosting. We only report the accuracy of the quantum kernel method up to system size $n = 28$ due to the high simulation cost and the inferior performance. \label{fig:QuantumError}}
\end{figure*}

We now provide numerical evidence up to 30 qubits that supports our theory on the relation between the dimension $d$, the geometric difference $g$, and the prediction performance. Using the projected quantum kernel, the geometric difference $g$ is much larger and we see the strongest empirical advantage of a scalable quantum model on quantum data sets to date.  These are the largest combined simulation and analysis in digital quantum machine learning that we are aware of, and make use of the TensorFlow and TensorFlow-Quantum package~\cite{broughton2020tensorflow}, reaching a peak throughput of up to 1.1 quadrillion floating point operations per second (petaflop/s). Trends of  approximately 300 teraflop/s for quantum simulation and 800 teraflop/s for classical analysis were observed up to the maximum experiment size with the overall floating point operations across all experiments totalling approximately 2 quintillion (exaflop).

In order to mimic a data distribution that pertains to real-world data, we conduct our experiments around the fashion-MNIST data set~\cite{xiao2017fashion}, which is an image classification for distinguishing clothing items, and is more challenging than the original digit-based MNIST source \cite{lecun2010mnist}. We pre-process the data using principal component analysis \cite{jolliffe1986principal} to transform each image into an $n$-dimensional vector. The same data is provided to the quantum and classical models, where in the classical case the data is the $n$-dimensional input vector, and the quantum case uses a given circuit to embed the $n$-dimensional vector into the space of $n$ qubits.  For quantum embeddings, we explore three options, E1 is a separable rotation circuit~\cite{schuld2019quantum, schuld2020circuit, skolik2020layerwise}, E2 is an IQP-type embedding circuit~\cite{havlivcek2019supervised}, and E3 is a Hamiltonian evolution circuit, with explicit constructions in Appendix~\ref{app:detailnum}.

For the classical ML task (C), the goal is to correctly identify the images as shirts or dresses from the original data set.  For the quantum ML tasks, we use the same fashion-MINST source data and embeddings as above, but take as function values the expectation value of a local observable that has been evolved under a quantum neural network resembling the Trotter evolution of 1D-Heisenberg model with random couplings.
In these cases, the embedding is taken as part of the ground truth, so the resulting function will be different depending on the quantum embedding.
For these ML tasks, we compare against the best performing model from a list of standard classical ML algorithms with properly tuned hyper-parameters (see Appendix~\ref{app:detailnum} for details).

In Fig.~\ref{fig:DimensionAccuracy}, we give a comparison between the prediction performance of classical and quantum ML models. One can see that not only do classical ML models perform best on the original classical dataset, the prediction performance for the classical methods on the quantum datasets is also very competitive and can even outperform existing quantum ML models despite the quantum ML models having access to the training embedding while the classical methods do not. The performance of the classical ML model is especially strong on Dataset (Q, E1) and Dataset (Q, E2).  This elevation of the classical performance is evidence of the power of data.
Moreover, this intriguing behavior and the lack of quantum advantage may be explained by considering the effective dimension $d$ and the geometric difference $g$ following our theoretical constructions.
From Fig.~\ref{fig:DimensionAccuracy}a, we can see that the dimension $d$ of the original quantum state space grows rather quickly, and the geometric difference $g$ becomes small as the dimension becomes too large ($d \propto N$) for the standard quantum kernel.  The saturation of the dimension coincides with the decreasing and statistical fluctuations in performance seen in Fig.~\ref{fig:QuantumError}.
Moreover, given poor ML performance a natural instinct is to throw more resources at the problem, e.g. more qubits, but as demonstrated here, doing this for naïve quantum kernel methods is likely to lead to tiny inner products and even worse performance.
In contrast, the projected quantum space has a low dimension even when $d$ grows, and yields a higher geometric difference $g$ for all embeddings and system sizes.
Our methodology predicts that, when $g$ is small, classical ML model will be competitive or outperform the quantum ML model. This is verified in Fig.~\ref{fig:DimensionAccuracy}b for both the original and projected quantum kernel, where a small geometric difference $g$ leads to a very good performance of classical ML models and no large quantum advantage can be seen.
Only when the geometric difference $g$ is large (projected kernel method with embedding E3) can we see some mild advantage over the best classical method.
This result holds disregarding any detail of the quantum evolution we are trying to learn, even for ones that are hard to simulate classically.


In order to push the limits of separation between quantum and classical approaches in a learning setting, we now consider a set of engineered data sets with function values designed to saturate the geometric inequality $s_{\mathrm{C}} \leq g(K^\mathrm{C} || K^\mathrm{PQ})^2 s_{\mathrm{PQ}}$ between classical ML models with associated kernels and the projected quantum kernel method. In particular, we design the data set such that $s_{\mathrm{PQ}}=1$ and $s_{\mathrm{C}} = g(K^\mathrm{C} || K^\mathrm{PQ})^2$. Recall from Eq.~\eqref{eq:main_pred}, this data set will hence show the largest separation in the prediction error bound $\sqrt{s(N) / N}$. The engineered data set is constructed via a simple eigenvalue problem with the exact procedure described in Appendix ~\ref{app:engdata} and the results are shown in Fig.\ref{fig:QuantumError}. As the quantum nature of the encoding increases from E1 to E3, corresponding to increasing $g$, the performance of both the best classical methods and the original quantum kernel decline precipitously.  The advantage of projected quantum kernel closely follows the geometric difference $g$ and reaches more than $20\%$ for large sizes. Despite the optimization of $g$ only being possible for classical methods with an associated kernel, the performance advantage remains stable across other common classical methods.
Note that we also constructed engineered data sets saturating the geometric inequality between classical ML and the original quantum kernel, but the small geometric difference $g$ presented no empirical advantage at large system size (see Appendix~\ref{app:additional}).

In keeping with our arguments about the role of data, when we increase the number of training data $N$, all methods improve, and the advantage will gradually diminish.
While this data set is engineered, it shows the strongest empirical separation on the largest system size to date.  We conjecture that this procedure could be used with a quantum computer to create challenging data sets that are easy to learn with a quantum device, hard to learn classically, while still being easy to verify classically given the correct labels.  Moreover, the size of the margin implies that this separation may even persist under moderate amounts of noise in a quantum device.

\section*{Discussion}
The use of quantum computing in machine learning remains an exciting prospect, but quantifying quantum advantage for such applications has some subtle issues that one must approach carefully.  Here, we constructed a foundation for understanding opportunities for quantum advantage in a learning setting. We showed quantitatively how classical ML algorithms with data can become computationally more powerful, and a prediction advantage for quantum models is not guaranteed even if the data comes from a quantum process that is challenging to independently simulate. Motivated by these tests, we introduced projected quantum kernels. On engineered data sets, projected quantum kernels outperform all tested classical models in prediction error. To the authors' knowledge, this is the first empirical demonstration of such a large separation between quantum and classical ML models.
 
This work suggests a simple guidebook for generating ML problems which give a large separation between quantum and classical models, even at a modest number of qubits. The size of this separation and trend up to 30 qubits suggests the existence of learning tasks that may be easy to verify, but hard to model classically, requiring just a modest number of qubits and allowing for device noise.  Claims of true advantage in a quantum machine learning setting require not only benchmarking classical machine learning models, but also classical approximations of quantum models.  Additional work will be needed to identify embeddings that satisfy the sometimes conflicting requirements of being hard to approximate classically and exhibiting meaningful signal on local observables for very large numbers of qubits.  Further research will be required to find use cases on data sets closer to practical interest and evaluate potential claims of advantage, but we believe the tools developed in this work will help to pave the way for this exciting frontier. 

\section*{Acknowledgements}
The authors want to thank Richard Kueng, John Platt, John Preskill, Thomas Vidick, Nathan Wiebe, and Chun-Ju Wu for valuable inputs and inspiring discussions.  We thank B\'alint Pat\'o for crucial contributions in setting up simulations.

\onecolumngrid

\bibliographystyle{apsrev4-1_with_title}
\bibliography{references}

\newpage
\appendix

\section{Rigorous proofs for statements regarding the motivating example}
\label{app:motivating-example}

We first give a simple proof that the motivating example $f(x)$ considered in the main text is in general hard to compute classically. Then, we show that training a classical ML model to predict the function $f(x)$ is easy on a classical computer.

\begin{proposition}[Restatement of Proposition~\ref{prop:BQP=BPP}]
Consider input vector $x \in \mathbb{R}^p$ encoded into an $n$-qubit state $\ket{x} = \sum_{k=1}^p x_k \ket{k}$.
If a randomized classical algorithm can compute
\begin{equation}
f(x) = \bra{x} U_{\text{QNN}}^\dagger O U_{\text{QNN}} \ket{x}    
\end{equation}
up to $0.15$-error with high probability over the randomness in the classical algorithm for any $n$, $U_{\text{QNN}}$ and $O$ in a time polynomial to the description length of $U_{\text{QNN}}$ and $O$, the input vector size $p$, and the qubit system size $n$, then
\begin{equation}
    \text{BPP} = \text{BQP}.
\end{equation}
\end{proposition}
\begin{proof}
We consider $p = 1$ and $\ket{x} = \ket{0^n}$ the all zero computational basis state.
A language $L$ is in BQP if and only if there exists a polynomial-time uniform family of quantum circuits $\{Q_n: n \in \mathbb{N}\}$, such that
\begin{enumerate}
    \item For all $n \in \mathbb{N}$, $Q_n$ takes an $n$-qubit computational basis state as input, apply $Q_n$ on the input state, and measures the first qubit in the computational basis as output.
    \item For all $z \in L$, the probability that output of $Q_{|z|}$ applying on the input $z$ is one is greater than or equal to $2/3$.
    \item For all $z \notin L$, the probability that output of $Q_{|z|}$ applying on the input $z$ is zero is greater than or equal to $2/3$.
\end{enumerate}
If we have the randomized classical algorithm that can compute $f(x)$, then for all $z$: input bitstring, we consider the unitary quantum neural network given by
\begin{equation}
    U_{\text{QNN}} = Q_{|z|} \bigotimes_{i=1}^n X^{z_i}_i,
\end{equation}
where $X_i$ is the Pauli-X matrix acting on the $i$-th qubit, and the observable $O$ is given by $Z_1$.
Hence, we have
\begin{enumerate}
    \item For all $z \in L$, $f(x) = \bra{x} U_{\text{QNN}}^\dagger O U_{\text{QNN}} \ket{x} = \bra{z} Q_{|z|}^\dagger Z_1 Q_{|z|} \ket{z} = \mathrm{Pr}[$the output of $Q_{|z|}$ applying on the input $z$ is one $] - \mathrm{Pr}[$the probability that output of $Q_{|z|}$ applying on the input $z$ is zero$] \geq 2/3 - 1/3 = 1/3$.
    \item For all $z \notin L$, $f(x) = \bra{x} U_{\text{QNN}}^\dagger O U_{\text{QNN}} \ket{x} = \bra{z} Q_{|z|}^\dagger Z_1 Q_{|z|} \ket{z} = \mathrm{Pr}[$the output of $Q_{|z|}$ applying on the input $z$ is one $] - \mathrm{Pr}[$the probability that output of $Q_{|z|}$ applying on the input $z$ is zero$] \leq 1/3 - 2/3 = -1/3$.
\end{enumerate}
By assumption, we can use the randomized classical algorithm to compute an estimate $\hat{f}(x)$ such that $|\hat{f}(x) - f(x)| < 0.15$ with high probability over the randomness of the classical algorithm.
Therefore with high probability, $\hat{f}(x) > 0$ if $z \in L$ and $\hat{f}(x) < 0$ if $z \notin L$.
We can use the indication of whether $\hat{f}(x)$ is positive or negative to determine if $z \in L$ or $z \notin L$ with high probability over the randomness of the classical algorithm.
This implies that $L \in \text{BPP}$.

Together, the existence of the randomized classical algorithm implies that $\text{BQP} \subseteq \text{BPP}$. By definition, we have $\text{BPP} \subseteq \text{BQP}$, hence $\text{BPP} = \text{BQP}$.
\end{proof}

We will now give a classical machine learning algorithm that could learn $f(x)$ efficiently using few samples.
Recall that the data point is given by $\{x_i\}_{i=1}^N$, where $x_{i} \in \mathbb{R}^p$.
Now, we consider a classical ML model with the kernel function $k(x_i, x_j) = (\sum_{l=1}^p x_{il} x_{jl})^2$, which can be evaluated in time linear in the dimension $p$. Note that this definition of kernel is equivalent to the quantum kernel $\Tr(\rho(x_i) \rho(x_j)) = |\braket {x_i}{x_j}|^2$ for the encoding $\ket{x_i} = \sum_{k=1}^p x_{ik} \ket{k}$.
We will now use the theoretical framework we developed in the main text (the section on testing quantum advantage). In particular, we will use the prediction error of quantum kernel method given in Eq.~\ref{eq:QKprederr}.
It shows that for any observable $O$ and quantum neural network $U_{\text{QNN}}$, the prediction error after training from $N$ data points $\{(x_i, y_i = f(x_i))\}$ is given by
\begin{equation}
    \mathbb{E}_{x \in \mathcal{D}} |h(x) - f(x)| \leq c \sqrt{\frac{\min(d, \Tr(O^2))}{N}},
\end{equation}
where $d$ is the Hilbert space dimension of $\{\rho(x_i)\}_{i=1}^N$.
Because we have $\rho(x_i) = \ket{x_i}\! \bra{x_i}$ and $\ket{x_i} = \sum_{k=1}^p x_{ik} \ket{k}$, the dimension of the Hilbert space is upper bounded by $p^2$.
Therefore,
\begin{equation}
    \mathbb{E}_{x \in \mathcal{D}} |h(x) - f(x)| \leq c \sqrt{\frac{\min(d, \Tr(O^2))}{N}} \leq c \sqrt{\frac{p^2}{N}}.
\end{equation}
This is the result stated in the main text.
For more details about the machine learning models, the prediction error bound, and the proof for the prediction error bound of quantum kernel methods, see Appendix~\ref{sec:proofmain}~and~\ref{sec:QKmethodanalysis}.

\section{Complexity-theoretic argument for the power of data}
\label{sec:powerdatarigor}

\begin{figure*}[t!]
\centering
\includegraphics[width=\textwidth]{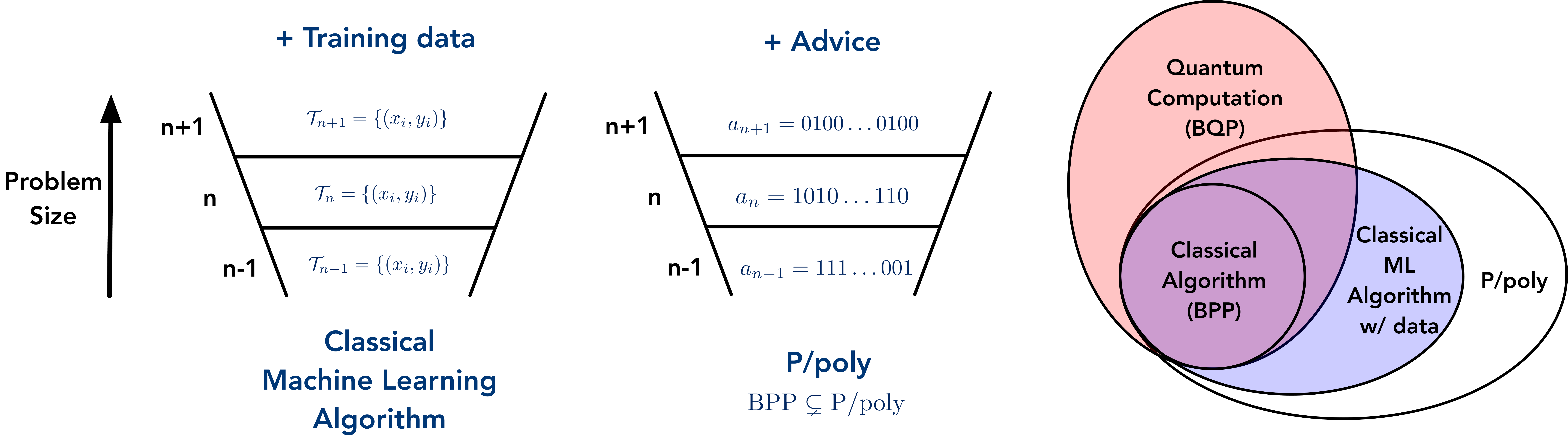}
    \caption{We present an illustration of the complexity class for classical machine learning algorithm with the availability of data. To the right, we have a diagram showing the relations between different complexity classes. 
    \label{fig:MLclass}}
\end{figure*}

In the main text, we give an argument based on an example to demonstrate the power of data. However, this is not satisfactory when we want to put the power of data on a rigorous footing.
To demonstrate this fact from a rigorous standpoint, let us capture classical ML algorithms that can learn from data by means of a complexity class, which we refer to as BPP/samp.
A language $L$ of bit strings is in BPP/samp if and only if the following holds:
There exists probabilistic Turing machines $D$ and $M$. $D$ generates samples $x$ with $|x| = n$ in polynomial time for any input size $n$. $D$ defines a sequence of input distributions $\{\mathcal{D}_n\}$. $M$ takes an input $x$ of size $n$ along with $\mathcal{T}=\left\{(x_i,y_i)\right\}_{i=1}^{\mathrm{poly}(n)}$ of polynomial size, where $x_i$ is sampled from $\mathcal{D}_n$ using Turing machine $D$ and $y_i$ conveys language membership: $y_i=1$ if $x_i\in L$ and $y_i=0$ if $x_i \not\in L$.
Moreover, we require
\begin{itemize}
    \item The probabilistic Turing machine $M$ to process all inputs $x$ in polynomial time (polynomial runtime).
    \item For all $x \in L$, $M$ outputs $1$ with probability greater than or equal to $2/3$ (prob.~completeness).
    \item For all $x \notin L$, $M$ outputs $1$ with probability less than or equal to $1/3$ (prob.~soundness).
\end{itemize}

If the Turing machine $M$ neglects the sampled data $\mathcal{T}$, this is equivalent to the definition of BPP.
Hence BPP is contained inside BPP/samp.

We can also see that $\mathcal{T}$ is a restricted form of randomized advice string.
It is not hard to show that BPP/samp is contained in P/poly based on the same proof strategy for Adleman's theorem.
We consider a new probabilistic Turing machine $M'$ that runs $M$ for $18n$ times. Each time, we use an independently sampled training set $\mathcal{T}$ from $\mathcal{D}_n$.
Then we take a majority vote from the $18n$ runs.
By Chernoff bound, the probability of failure for any given $x$ with $|x| = n$ would be at most $1/\mathrm{e}^n$. Hence by union bound, the probability that all $x$ with $|x| = n$ succeeds is at least $1 - (2 / \mathrm{e})^n$. This implies the existence of a particular choice of the $18n$ training sets and $18n$ random bit-strings used in each run of the probabilistic Turing machine $M$, such that for all $x$ with $|x| = n$ the decision of whether $x \in L$ is correct.
We simply define the advice string $a_n$ to one particular choice of the $18 n$ training sets and $18 n$ random bit-strings, which will be a string of size polynomial in $n$.
Hence we know that BPP/samp is contained in P/poly. An illustration is given in Figure \ref{fig:MLclass}.
We leave open the question of whether BPP/samp is strictly contained in P/poly.

The separation between P/poly and BPP is often illustrated by undecidable unary languages. The separation between BPP/samp and BPP could also be proved using a similar example.
Actually, an undecidable unary language serves as an equally good example.
Here, we choose to present a slightly more complicated example to demonstrate what BPP/samp could do.
Let us consider an undecidable unary language $L_{\mathrm{hard}} = \{1^n | n \in A\}$, where $A$ is a subset of the natural numbers $\mathbb{N}$ and a classically easy language $L_{\mathrm{easy}} \in \mathrm{BPP}$.
We assume that for every input size $n$, there exists an input $a_n \in L_{\mathrm{easy}}$ and an input $b_n \notin L_{\mathrm{easy}}$.
We define a new language as follows:
\begin{equation}
    L = \bigcup_{n=1}^\infty \{ x | \forall x \in L_{\mathrm{easy}}, 1^n \in L_{\mathrm{hard}}, |x| = n \} \cup \{ x | \forall x \notin L_{\mathrm{easy}}, 1^n \notin L_{\mathrm{hard}}, |x| = n \}.
\end{equation}
For each size $n$, if $1^n \in L_{\mathrm{hard}}$, $L$ would include all $x \in L_{\mathrm{easy}}$ with $|x| = n$. If $1^n \notin L_{\mathrm{hard}}$, $L$ would include all $x \notin L_{\mathrm{easy}}$ with $|x| = n$.
By definition, if we can output whether $x \in L$ for an input $x$ using a classical algorithm (BPP), we can output whether $1^n \in L_{\mathrm{hard}}$ by computing whether $x \in L_{\mathrm{easy}}$. This is however impossible due to the undecidability of $L_{\mathrm{hard}}$. Hence the language $L$ is not in BPP.
On the other hand, for every size $n$, a classical machine learning algorithm can use a single training data point $(x_0, y_0)$ to decide whether $x \in L$. An algorithm is as follows. Using $y_0$, we know whether $x_0 \in L_{\mathrm{easy}}$. Hence, we know whether $1^n \in L_{\mathrm{hard}}$. Then for any input $x$ with size $n$, we can output the correct answer by using the knowledge of whether $1^n \in L_{\mathrm{hard}}$ combined with a classical computation to decide whether $x \in L_{\mathrm{easy}}$.
This example nicely illustrates the power of data and how machine learning algorithms can utilize it. In summary, the data provide information that is hard to compute with a classical computer (e.g., whether $1^n \in L_{\mathrm{hard}}$). Then the classical machine learning algorithm would perform classical computation to infer the solution from the given knowledge (e.g., computing whether $x \in L_{\mathrm{easy}}$).
The same language $L$ also yields a separation between BPP/samp and BQP because $L$ is constructed to be undecidable.

From a practical perspective, it is impossible to obtain training data that is undecidable. But it is still possible to obtain data that cannot be efficiently computed with a classical computer, since the universe operates quantum mechanically. If the universe computes classically, then the data we can obtain will be computable by BPP and there is no separation between classical ML algorithm with data from BPP and BPP.
We now present a simple argument for a separation between classical algorithm learning with data coming from quantum computation and BPP. This follows from a similar argument as the previous example. Here, we assume that there is a sequence of quantum circuits such that the Z measurement on the first qubit (being $+1$ with probability $> 2/3$ or $< 1/3$) is hard to decide classically. This defines a unary language $L_{\mathrm{hard}}'$ that is outside BPP, but inside BQP.
We can then use $L_{\mathrm{hard}}'$ in replace of $L_{\mathrm{hard}}$ for the example above.
When the data comes from BQP, the class classical ML algorithms that can learn from the data would not have a separation from BQP.

\section{Relation between quantum kernel methods and quantum neural networks}
\label{sec:qk_qnn}
In this section we demonstrate the formal equivalence of an arbitrary depth neural network with a quantum kernel method built from the original quadratic quantum kernel.  This connection helps demonstrate the feature map induced by this kernel to motivate its use as opposed to the simpler inner product.  While this equivalence shows the flexibility of this quantum kernel, it does not imply that it allows learning with a parsimonious amount of data.  Indeed, in many cases it requires both an exponential amount of data and exponential precision in evaluation due to the fidelity type metric.  In later sections we show simple cases where it fails for illustration purposes.

\begin{proposition}
Training an arbitrarily deep quantum neural network $U_{\mathrm{QNN}}$ with a trainable observable $O$
is equivalent to training a quantum kernel method with kernel $k_Q(x_i, x_j) = \Tr(\rho(x_i) \rho(x_j))$.
\end{proposition}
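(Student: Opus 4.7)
The plan is to recast both training procedures as optimizations over the same space of Hermitian operators, using the Hilbert--Schmidt inner product as a shared feature map. First I would rewrite the QNN prediction as
\begin{equation}
    f_{\mathrm{QNN}}(x) = \Tr\!\bigl(U_{\mathrm{QNN}}^\dagger O\, U_{\mathrm{QNN}}\,\rho(x)\bigr) = \mathrm{vec}(O^U)^\dagger\,\mathrm{vec}(\rho(x)),
\end{equation}
and observe that because the circuit is arbitrarily deep and $O$ is a freely trainable observable, the map $(U,O)\mapsto U^\dagger O U$ is surjective onto the set of all Hermitian matrices (take $U=I$ and $O=H$). Hence the QNN hypothesis class is exactly $\{x\mapsto \Tr(H\rho(x)) : H = H^\dagger\}$.

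Next I would identify the feature map induced by $k_Q$: writing $k_Q(x_i,x_j) = \mathrm{vec}(\rho(x_i))^\dagger\mathrm{vec}(\rho(x_j))$ gives $\phi(x) = \mathrm{vec}(\rho(x))$, so the kernel predictor takes the form $h(x) = w^\dagger\phi(x) = \Tr(W\rho(x))$ where $W$ is the Hermitian matrix obtained by un-vectorizing $w$. The representer theorem then guarantees that any minimizer of an empirical loss regularized by $\|w\|^2$ satisfies $w = \sum_i\alpha_i\phi(x_i)$, i.e.\ $W = \sum_i \alpha_i\rho(x_i)$, which is automatically Hermitian. Both training problems therefore search the same class $\{\Tr(H\rho(\cdot)) : H = H^\dagger\}$, and using the $\|H\|_F^2$ penalty (which is unitarily invariant, so $\|O^U\|_F = \|O\|_F$, and agrees with $\|w\|^2$ under vectorization) both reduce to the single convex program
\begin{equation}
    \min_{H = H^\dagger}\ \mathcal{L}\!\bigl(\{\Tr(H\rho(x_i))\}_{i=1}^N\bigr) + \lambda\,\|H\|_F^2,
\end{equation}
whose minimizer lies in $\mathrm{span}\{\rho(x_i)\}_{i=1}^N$ and therefore yields identical predictions on every input.

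The main obstacle will be pinning down precisely what ``equivalence'' means when the two procedures use different loss functions or regularizers. Without regularization an unconstrained QNN can pick any Hermitian $H$ that fits the training data, and two such choices will generally disagree off the training set; fixing the natural Hilbert--Schmidt regularizer (which corresponds to the $\ell^2$ penalty on $w$ and is invariant under $U_{\mathrm{QNN}}$) is the key step that makes both optimizers collapse onto the same point. A minor bookkeeping issue is the real-versus-complex convention when vectorizing Hermitian matrices, already resolved in the main text by defining $\mathrm{vec}(X)$ to stack real and imaginary parts; I would adopt that convention throughout to ensure $\mathrm{vec}(A)^\dagger\mathrm{vec}(B) = \Tr(AB)$.
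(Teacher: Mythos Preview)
Your proposal is correct and shares the paper's core observation: since $O$ is trainable and the circuit is unrestricted, the map $(U,O)\mapsto U^\dagger O U$ sweeps out all Hermitian matrices, so the QNN hypothesis class coincides with the linear class $\{x\mapsto\Tr(W\rho(x))\}$ underlying $k_Q$. Where you diverge is in how you close the argument. The paper writes the kernel training as $\min_{W\in\mathbb{C}^{2^n\times 2^n}}\sum_i l(\Tr(W\rho_i),y_i)+\lambda\Tr(W^\dagger W)$, then \emph{sets $\lambda=0$} and uses Hermiticity of $\rho_i$ to rewrite the resulting unregularized problem as $\min_{U,O}\sum_i l(\Tr(OU\rho_i U^\dagger),y_i)$, which is literally the QNN objective; equivalence is asserted at the level of the optimization problems themselves, with no claim about which interpolant is selected. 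You instead keep $\lambda>0$, graft a matching $\|O\|_F^2$ penalty onto the QNN side (using unitary invariance of the Frobenius norm), and invoke the representer theorem to force both minimizers into $\mathrm{span}\{\rho(x_i)\}$. This buys you a unique predictor and resolves the off-training-set ambiguity you worry about, at the price of imposing a regularizer on the QNN that the proposition never mentions. Both readings of ``equivalent'' are defensible, but be aware that the paper's proof is the simpler $\lambda\to 0$ identification of search spaces, not the regularized representer argument you outline.
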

\begin{proof}
Let us define $\rho_i = \rho(x_i) = U_{\mathrm{enc}}(x_i) \ket{0^n}\! \bra{0^n} U_{\mathrm{enc}}(x_i)^\dagger$ to be the corresponding quantum states for the input vector $x_i$.
The training of a quantum neural network can be written` as
\begin{equation}
    \min_{U \in \mathcal{C} \subset U(2^n)} \sum_{i=1}^N l(\Tr(O U \rho_i U^\dagger), y_i),
\end{equation}
where $l(\tilde{y}, y)$ is a loss function that measures how close the prediction $\tilde{y}$ is to the true label $y$, $\mathcal{C}$ is the space of all possible unitaries considered by the parameterized quantum circuit, $O$ is some predefined observable that we measure after evolving with $U$.
Let us denote the optimal $U$ to be $U^*$, then the prediction for a new input $x$ is given by
$\Tr(O U^* \rho(x) (U^*)^\dagger)$.

On the other hand, the training of the quantum kernel method under the implied feature map is equivalent to training $W \in  \mathbb{C}^{2^n \times 2^n}$ under the optimization
\begin{equation}
    \min_{W \in \mathbb{C}^{2^n \times 2^n}} \sum_{i=1}^N l(\Tr(W \rho_i), y_i) + \lambda \Tr(W^\dagger W),
\end{equation}
where $\lambda \geq 0$ is the regularization parameter and $l(\tilde{y}, y)$ is the loss function.
Let us denote the optimal $W$ to be $W^*$, then the prediction for a new input $x$ is given by
$\Tr(W^* \rho(x))$.
The well-known kernel trick allows efficient implementation of this machine learning model, and connects the original quantum kernel to the derivation here.
Using the fact that $\rho_i$ is Hermitian and set $\lambda = 0$,
the quantum kernel method can be expressed as
\begin{equation}
    \min_{\substack{U \in U(2^n), \\ O \in \mathbb{C}^{2^n \times 2^n}, O = O^\dagger}} \sum_{i=1}^N l(\Tr(O U \rho_i U^\dagger), y_i).
\end{equation}
This is equivalent to training an arbitrarily deep quantum neural network $U$ with a trainable observable $O$.
\end{proof}

\section{Proof of a general form of prediction error bound}
\label{sec:proofmain}
 
This section is dedicated to deriving the precise statement for the core prediction error bound from which we base our methodology: $\mathbb{E}_x|h(x) - f(x)| \leq \mathcal{O}(\sqrt{s / N})$ given by the first inequality in Equation~\eqref{eq:main_pred}. We will provide a detailed proof for the following general theorem when we include the regularization parameter $\lambda$. The regularization parameter $\lambda$ will be used to improve prediction performance by limiting the complexity of the machine learning model.

\begin{theorem} \label{thm:main}
Consider an observable $O$ with $\norm{O}_\infty \leq 1$, a quantum unitary $U$ (e.g., a quantum neural network or a general Hamiltonian evolution), a mapping of classical vector $x$ to quantum system $\rho(x)$, and a training set of $N$ data $\{(x_i, y_i = \Tr(O^U \rho(x_i)) )\}_{i=1}^N$, with $O^U = U^\dagger O U$ being the Heisenberg evolved observable.
The training set is sampled from some unknown distribution over the input $x$.
Suppose that $k(x, x')$ can be evaluated efficiently and the kernel function is re-scaled to satisfy $\sum_{i=1}^N k(x_i, x_i) = N$. Define the Gram matrix $K_{ij} = k(x_i, x_j)$. For any $\lambda \geq 0$, with probability at least $1-\delta$ over the sampling of the training data, we can learn a model $h(x)$ from the training data, such that the expected prediction error is bounded by
\begin{equation} \label{eq:mainprederr}
    \mathbb{E}_{x} |h(x) - \Tr(O^U \rho(x))| \leq \mathcal{O}\left(\sqrt{ \frac{\Tr(A_{\mathrm{tra}} O^U \otimes O^U)}{N}} + \sqrt{\frac{\Tr(A_{\mathrm{gen}} O^U \otimes O^U)}{N}} + \sqrt{\frac{\log(1 / \delta)}{N}}\right),
\end{equation}
where the two operators $A_{\mathrm{tra}}, A_{\mathrm{gen}}$ are given as
\begin{align}
A_{\mathrm{tra}} & = \lambda^2 \sum_{i=1}^N \sum_{j=1}^N ((K + \lambda I)^{-2})_{ij} \rho(x_i)\otimes \rho(x_j), \label{eq:Atra} \\
A_{\mathrm{gen}} & = \sum_{i=1}^N \sum_{j=1}^N ((K + \lambda I)^{-1} K (K + \lambda I)^{-1})_{ij} \rho(x_i)\otimes \rho(x_j).    \label{eq:Agen}
\end{align}
This is a data-dependent bound as $A_{\mathrm{tra}}$ and $A_{\mathrm{gen}}$ both depend on the $N$ training data.
\end{theorem}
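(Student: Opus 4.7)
\smallskip

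\noindent\textbf{Proof plan.} The target model is kernel ridge regression, so I would begin by writing down its closed-form solution:
\begin{equation*}
h(x) \;=\; \sum_{i,j=1}^{N} y_j \,\bigl((K+\lambda I)^{-1}\bigr)_{ji}\, k(x_i,x) \;=\; w^{\dagger}\phi(x),\qquad w \;=\; \sum_{i,j}\bigl((K+\lambda I)^{-1}\bigr)_{ji}\,y_j\,\phi(x_i).
\end{equation*}
With $y_i = \Tr(O^U \rho(x_i))$, any bilinear form $y^{\top} M y$ rewrites as $\Tr\bigl((O^U\otimes O^U)\sum_{ij} M_{ij}\,\rho(x_i)\otimes\rho(x_j)\bigr)$. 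This purely algebraic identity is the bridge that will make the operators $A_{\mathrm{tra}}$ and $A_{\mathrm{gen}}$ appear naturally on the right-hand side of \eqref{eq:mainprederr}.

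\smallskip

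\noindent The next step is a clean error decomposition. Using the triangle inequality (and replacing absolute value by a $1$-Lipschitz surrogate when needed to apply contraction-type inequalities),
\begin{equation*}
\mathbb{E}_x\bigl|h(x)-f(x)\bigr| \;\le\; \underbrace{\frac{1}{N}\sum_{i=1}^{N}\bigl|h(x_i)-y_i\bigr|}_{\text{training error}} \;+\; \underbrace{\Bigl|\mathbb{E}_x|h(x)-f(x)| \;-\; \tfrac{1}{N}\textstyle\sum_i|h(x_i)-y_i|\Bigr|}_{\text{generalization gap}}.
\end{equation*}
For the training term I would invoke the standard ridge-regression identity $y - K(K+\lambda I)^{-1}y = \lambda(K+\lambda I)^{-1}y$, pass from $\ell_1$ to $\ell_2$ via Cauchy--Schwarz over the $N$ residuals, and then recognize
\begin{equation*}
\tfrac{\lambda^2}{N}\, y^{\top}(K+\lambda I)^{-2} y \;=\; \tfrac{1}{N}\,\Tr\!\bigl(A_{\mathrm{tra}}\,O^U\!\otimes O^U\bigr),
\end{equation*}
using the algebraic identity above. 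This produces the first square root in \eqref{eq:mainprederr}.

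\smallskip

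\noindent For the generalization gap I would use the now-standard Rademacher-complexity bound for kernel classes of bounded RKHS norm: for the hypothesis class $\{h = w^{\dagger}\phi : \|w\|\le B\}$ with a kernel normalized so that $\sum_i k(x_i,x_i) = N$, the empirical Rademacher complexity is $O(B/\sqrt{N})$, and a McDiarmid-type concentration argument applied to the bounded loss $|h(x)-f(x)|$ (both functions are bounded once $\|O\|_\infty \le 1$ and $\|w\|\le B$) then gives a uniform deviation bound of order $B/\sqrt{N} + \sqrt{\log(1/\delta)/N}$ with probability at least $1-\delta$. Here the data-dependent RKHS norm of the learned predictor is exactly $\|w\|^2 = y^{\top}(K+\lambda I)^{-1}K(K+\lambda I)^{-1}y$, which by the same algebraic identity equals $\Tr\!\bigl(A_{\mathrm{gen}}\,O^U\!\otimes O^U\bigr)$; plugging this into the Rademacher bound yields the second square root term. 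Combining all three pieces gives \eqref{eq:mainprederr}.

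\smallskip

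\noindent\textbf{Main obstacle.} The routine steps (closed form, residual identity, rewriting $y^{\top}My$ as $\Tr(A\,O^U\!\otimes O^U)$) are mechanical. The delicate piece is justifying that the generalization gap really scales with the data-dependent $\|w\|$ of the \emph{trained} model rather than an a priori radius $B$: one must either (i) apply the Rademacher bound over a sequence of nested balls and union-bound over a dyadic grid to localize $B$ to $\|w\|$ up to logarithmic factors, or (ii) use a direct stability argument for kernel ridge regression. Equally, one must verify that the rescaling $\Tr(K)=N$ is exactly what makes the Rademacher complexity normalize to $\|w\|/\sqrt{N}$ with no hidden dependence on the feature-space dimension (which is essential if the feature map comes from an exponentially large quantum Hilbert space). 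Once these two points are settled, the $L^1\to L^2$ passage via Jensen and the McDiarmid concentration into the final $\sqrt{\log(1/\delta)/N}$ term are standard.
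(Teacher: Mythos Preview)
Your plan is correct and matches the paper's proof almost step for step: same training/generalization split, same ridge-residual identity producing $A_{\mathrm{tra}}$, same Rademacher bound for RKHS balls plus Talagrand contraction producing $A_{\mathrm{gen}}$, and the paper resolves your ``main obstacle'' exactly via your option~(i), union-bounding over integer radii $\gamma=1,2,3,\ldots$ with weights $\delta/2\gamma^2$. The one device you omit is that the paper clips the predictor, taking $f_w(x)=\min\bigl(1,\max(-1,w^\dagger\phi(x))\bigr)$; this forces the loss into $[0,2]$ \emph{uniformly in} $\|w\|$, which is precisely what keeps the concentration term at $\sqrt{\log(1/\delta)/N}$ rather than picking up an extra $\|w\|$ factor after the union bound over radii.
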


When we take the limit of $\lambda \rightarrow 0$, we have $A_{\mathrm{tra}} = 0$ and $A_{\mathrm{gen}} = \sum_{i=1}^N \sum_{j=1}^N (K^{-1})_{ij} \rho(x_i)\otimes \rho(x_j)$.
Thus with probability at least $0.99 = 1 - \delta$, we have
\begin{equation} \label{eq:mainprederr_noreg}
    \mathbb{E}_{x} |h(x) - \Tr(O^U \rho(x))| \leq \mathcal{O}\left( \sqrt{\frac{s_K(N)}{N}}\right),
\end{equation}
where $s_K(N) = \sum_{i=1}^N \sum_{j=1}^N (K^{-1})_{ij} \Tr(O^U \rho(x_i)) \Tr(O^U \rho(x_j))$. This is the formula stated in the main text.
However, in practice, we would recommend the use of regularization $\lambda > 0$ to prevent numerical instability and to obtain prediction error bound when we use a regularized ML model.
 
In Section~\ref{sec:MLmodels}, we will present the definition of the machine learning models used to prove Theorem~\ref{thm:main}. In Section~\ref{sec:trainingerror}~and~\ref{sec:generalizationerror}, we will analyze the training error and generalization error of the machine learning models we consider to prove the prediction error bound given in Theorem~\ref{thm:main}.

\subsection{Definition and training of machine learning models} 
\label{sec:MLmodels}
 
We consider a class of machine learning models, including Gaussian kernel regression, infinite-width neural networks, and quantum kernel methods.
These models are equivalent to training a linear function mapping from a (possibly infinite-dimensional) Hilbert space $\mathcal{H}$ to $\mathbb{R}$.
The linear function can be written as $\langle w, \phi(x) \rangle$, where $w$ parameterizes the linear function, $\langle \cdot, \cdot \rangle: \mathcal{H}\times \mathcal{H} \rightarrow \mathbb{R}$ is an inner product, and $\phi(x)$ is a nonlinear mapping from the input vector $x$ to the Hilbert space $\mathcal{H}$.
For example, in quantum kernel method, we use the space of $2^n \times 2^n$ Hermitian matrices as the Hilbert space $\mathcal{H}$. This yields a natural definition of inner product $\langle \rho, \sigma \rangle = \Tr(\rho \sigma) \in \mathbb{R}$.

Because the output $y = \Tr(U^\dagger O U \rho(x))$ of the quantum model satisfies $y \in [-1, 1]$, we confine the output of the machine learning model to the interval $[-1, 1]$. The resulting machine learning model would be
\begin{equation}
    h_w(x) = \min(1, \max(-1, \langle w, \phi(x) \rangle)).
\end{equation}
For efficient optimization of $w$, we consider minimization of the following loss function
\begin{equation} \label{eq:optw}
    \min_{w} \lambda \langle w, w \rangle + \sum_{i=1}^N \left( \langle w, \phi(x) \rangle - \Tr(U^\dagger O U \rho(x_i)) \right)^2,
\end{equation}
where $\lambda \geq 0$ is a hyper-parameter.
We define $\Phi = (\phi(x_1), \ldots, \phi(x_N))$.
The kernel matrix $K = \Phi^\dagger \Phi$ is an $N \times N$ matrix that defines the geometry between all pairs of the training data.
We see that $K_{ij} = \langle \phi(x_i), \phi(x_j) \rangle = k(x_i, x_j) \in \mathbb{R}$.
Without loss of generality, we consider $\Tr(K) = N$, which can be done by rescaling $k(x_i, x_j)$.
The optimal $w$ can be written down explicitly as
\begin{equation} \label{eq:optw-soln}
    w = \sum_{i=1}^N \sum_{j=1}^N \phi(x_i) ((K+\lambda I)^{-1})_{ij} \Tr(U^\dagger O U \rho(x_j)).
\end{equation}
Hence the trained machine learning model would be
\begin{equation}
    h_w(x) = \min\left(1, \max\left(-1, \sum_{i=1}^N \sum_{j=1}^N k(x_i, x) ((K+\lambda I)^{-1})_{ij} \Tr(U^\dagger O U \rho(x_j))\right)\right).
\end{equation}
This is an analytic representation for various trained machine learning models, including least-square support vector machine \cite{suykens1999least}, kernel regression \cite{nadaraya1964estimating, altman1992introduction}, and infinite-width neural networks \cite{jacot2018neural}.
We will now analyze the prediction error of these machine learning models:
\begin{equation} \label{eq:errorw}
\epsilon_w(x) = |h_w(x) - \Tr(U^\dagger O U \rho(x))|,
\end{equation}
which is uniquely determined by the kernel matrix $K$ and the hyper-parameter $\lambda$. In particular, we will focus on providing an upper bound on the expected prediction error
\begin{align}
    \mathbb{E}_{x} \,\, \epsilon_w(x) = \underbrace{\frac{1}{N}\sum_{i=1}^N \epsilon_w(x_i)}_{\text{Training error}} + \underbrace{\mathbb{E}_{x} \,\, \epsilon_w(x) - \frac{1}{N}\sum_{i=1}^N \epsilon_w(x_i)}_{\text{Generalization error}},
\end{align}
which is the sum of training error and generalization error.

\subsection{Training error}
\label{sec:trainingerror}

We will now relate the training error to the optimization problem, i.e., Equation~\eqref{eq:optw}, for obtaining the machine learning model $h_w(x)$.
Because $\norm{O} \leq 1$, we have $\Tr(U^\dagger O U \rho(x)) \in [-1, 1]$, and hence
$\epsilon_w(x) = |h_w(x) - \Tr(U^\dagger O U \rho(x))| \leq |\langle w, \phi(x) \rangle - \Tr(U^\dagger O U \rho(x))|$.
Using the convexity of $x^2$ and Jensen's inequality, we obtain
\begin{equation}
    \frac{1}{N}\sum_{i=1}^N \epsilon_w(x_i) \leq \sqrt{\frac{1}{N} \sum_{i=1}^N \left( \langle w, \phi(x) \rangle - \Tr(U^\dagger O U \rho(x_i)) \right)^2}.
\end{equation}
We can plug in the expression for the optimal $w$ given in Equation~\eqref{eq:optw-soln} to yield
\begin{equation}
    \frac{1}{N}\sum_{i=1}^N \epsilon_w(x_i) \leq \sqrt{ \frac{\Tr(A_{\mathrm{tra}} (U^\dagger O U) \otimes (U^\dagger O U))}{N} },
\end{equation}
where $A_{\mathrm{tra}} = \lambda^2 \sum_{i=1}^N \sum_{j=1}^N ((K + \lambda I)^{-2})_{ij} \rho(x_i)\otimes \rho(x_j)$.
When $K$ is invertible and $\lambda = 0$, we can see that the training error is zero. However, in practice, we often set $\lambda > 0$.

\subsection{Generalization error}
\label{sec:generalizationerror}

A basic theorem in statistics and learning theory is presented below. This theorem provides an upper bound on the largest (one-sided) deviation from expectation over a family of functions.

\begin{theorem}[See Theorem 3.3 in \cite{mohri2018foundations}] \label{thm:rademacher}
Let $\mathcal{G}$ be a family of function mappings from a set $\mathcal{Z}$ to $[0, 1]$.
Then for any $\delta > 0$, with probability at least $1 - \delta$ over identical and independent draw of $N$ samples from $\mathcal{Z}$: $z_1, \ldots, z_N$, we have for all $g \in \mathcal{G}$,
\begin{equation}
    \mathbb{E}_z[g(z)] \leq \frac{1}{N} \sum_{i=1}^N g(z_i) + 2 \mathbb{E}_\sigma \left[ \sup_{g \in \mathcal{G}} \frac{1}{N} \sum_{i=1}^N \sigma_i g(z_i) \right] + 3 \sqrt{\frac{\log(2 / \delta)}{2 N}},
\end{equation}
where $\sigma_1, \ldots \sigma_N$ are independent and uniform random variables over ${\pm 1}$.
\end{theorem}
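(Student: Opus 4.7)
The plan is a standard two-step concentration-plus-symmetrization argument from empirical process theory. Define the sample-dependent supremum deviation
\begin{equation*}
\Phi(S) = \sup_{g \in \mathcal{G}}\left(\mathbb{E}_z[g(z)] - \frac{1}{N}\sum_{i=1}^N g(z_i)\right)
\end{equation*}
for $S = (z_1,\ldots,z_N)$, so the theorem reduces to proving a high-probability upper bound on $\Phi(S)$. First I would establish a McDiarmid-type concentration for $\Phi$ by verifying bounded differences: replacing any coordinate $z_i$ by an arbitrary $z_i'$ shifts each candidate $\frac{1}{N}\sum_i g(z_i)$ by at most $1/N$ because $g$ takes values in $[0,1]$, and taking a supremum preserves this bound. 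McDiarmid's inequality then yields $\Pr[\Phi(S) - \mathbb{E}_S[\Phi(S)] \geq t] \leq \exp(-2Nt^2)$; setting $t = \sqrt{\log(2/\delta)/(2N)}$ gives, with probability at least $1-\delta/2$, the inequality $\Phi(S) \leq \mathbb{E}_S[\Phi(S)] + \sqrt{\log(2/\delta)/(2N)}$.

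Second, I would bound $\mathbb{E}_S[\Phi(S)]$ by the classical symmetrization trick. Introducing an independent ghost sample $S' = (z_1',\ldots,z_N')$ with the same distribution and using $\mathbb{E}_z[g(z)] = \mathbb{E}_{S'}[N^{-1}\sum_i g(z_i')]$, Jensen's inequality (convexity of $\sup$) pushes the supremum outside $\mathbb{E}_{S'}$ to give
\begin{equation*}
\mathbb{E}_S[\Phi(S)] \leq \mathbb{E}_{S,S'}\!\left[\sup_{g \in \mathcal{G}} \frac{1}{N}\sum_{i=1}^N \bigl(g(z_i')-g(z_i)\bigr)\right].
\end{equation*}
Since the pair $(z_i,z_i')$ is exchangeable, the random variable $g(z_i')-g(z_i)$ has the same joint law as $\sigma_i(g(z_i')-g(z_i))$ for independent Rademacher signs $\sigma_i \in \{\pm 1\}$. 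Inserting the $\sigma_i$, splitting the supremum across the two terms (both of which give the same expectation after relabeling), and dropping the ghost sample yields $\mathbb{E}_S[\Phi(S)] \leq 2\,\mathbb{E}_S[\hat{R}(S)]$, where $\hat{R}(S) = \mathbb{E}_\sigma[\sup_g N^{-1}\sum_i \sigma_i g(z_i)]$ is the empirical Rademacher complexity that appears in the theorem.

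Third, I would convert the expected empirical Rademacher complexity into the sample-dependent quantity required by the statement. Observe that $\hat{R}(S)$ itself satisfies bounded differences with constant $1/N$ (again since $g \in [0,1]$ and the $\sup$ and $\mathbb{E}_\sigma$ preserve the coordinate-wise Lipschitz bound), so a second application of McDiarmid at confidence $\delta/2$ produces $\mathbb{E}_S[\hat{R}(S)] \leq \hat{R}(S) + \sqrt{\log(2/\delta)/(2N)}$ with probability at least $1-\delta/2$. Combining the two McDiarmid tail bounds by a union bound and plugging in the symmetrization inequality, the concentration slacks accumulate as $\sqrt{\log(2/\delta)/(2N)} + 2\sqrt{\log(2/\delta)/(2N)} = 3\sqrt{\log(2/\delta)/(2N)}$, matching the factor in the theorem exactly.

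The technical core is routine, but the step that requires the most care is the symmetrization: one must justify the exchange of $\sup_g$ with $\mathbb{E}_{S'}$ via Jensen, then argue that inserting Rademacher signs preserves the joint distribution (using independence and identical distributions of $S$ and $S'$), and finally separate the resulting supremum into two identically distributed pieces. The rest of the proof is primarily bookkeeping of the failure probabilities and the $1/N$ bounded-differences constant, both of which hinge only on $g$ being $[0,1]$-valued.
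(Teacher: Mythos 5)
Your proof is correct and is precisely the standard argument behind Theorem~3.3 of Mohri et al., which the paper imports by citation without reproducing a proof: two applications of McDiarmid's inequality with bounded-difference constant $1/N$ (one for the supremum deviation $\Phi(S)$, one to pass from the expected to the empirical Rademacher complexity) combined with ghost-sample symmetrization, and the slacks $\sqrt{\log(2/\delta)/(2N)} + 2\sqrt{\log(2/\delta)/(2N)}$ account exactly for the factor of $3$. There is nothing to correct.
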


For our purpose, we will consider $\mathcal{Z}$ to be the space of input vector with $z_i = x_i$ drawn from some input distribution.
Each function $g$ would be equal to $\epsilon_w / 2$ for some $w$, where $\epsilon_w$ is defined in Equation~\eqref{eq:errorw}.
The reason that we divide by $2$ is because the range of $\epsilon_w$ is $[0, 2]$.
And $\forall \gamma = 1, 2, 3, \ldots$, we define $\mathcal{G}_\gamma$ to be $\{\epsilon_w / 2 \,\, | \,\, \forall \norm{w} \leq \gamma\}$.
The definition of an infinite sequence of family of functions $\mathcal{G}_\gamma$ is useful for proving a prediction error bound for an unbounded class of machine learning models $h_w(x)$, where $\norm{w}$ could be arbitrarily large.
Using Theorem~\ref{thm:rademacher} and multiplying the entire inequality by $2$, we can show that the following inequality holds for any $w$ with $\norm{w} \leq \gamma$,
\begin{equation} \label{ineq:RMC}
    \mathbb{E}_{x}[\epsilon_w(x)] - \frac{1}{N} \sum_{i=1}^N \epsilon_w(x_i) \leq 2 \mathbb{E}_\sigma \left[ \sup_{\norm{v} \leq \gamma} \frac{1}{N} \sum_{i=1}^N \sigma_i \epsilon_v(x_i) \right] + 6 \sqrt{\frac{\log(4 \gamma^2 / \delta)}{2 N}},
\end{equation}
with probability at least $1 - \delta / 2 \gamma^2$. This probabilistic statement holds for any $\gamma = 1, 2, 3, \ldots$, but this does not yet guarantee that the inequality holds for all $\gamma$ with high probability.
We need to apply a union bound over all $\gamma$ to achieve this, which shows that Inequality~\eqref{ineq:RMC} holds for all $\gamma$ with probability at least $1 - \sum_{\gamma=1}^\infty \delta / 2 \gamma^2 \geq 1 - \delta$.

Together we have shown that, for any $w \in \mathcal{H}$, the generalization error $\mathbb{E}_{x}[\epsilon_w(x)] - \frac{1}{N} \sum_{i=1}^N \epsilon_w(x_i)$ is upper bounded by
\begin{equation} \label{eq:radmacall}
    2 \mathbb{E}_\sigma \left[ \sup_{\norm{v} \leq \lceil \norm{w} \rceil} \frac{1}{N} \sum_{i=1}^N \sigma_i \epsilon_v(x_i) \right] + 6 \sqrt{\frac{\log(4 \lceil \norm{w} \rceil^2 / \delta)}{2 N}},
\end{equation}
with probability at least $1 - \delta$, where we consider the particular inequality with $\gamma = \lceil \norm{w} \rceil$. We will now analyze the above inequality using Talagrand's contraction lemma.

\begin{lemma}[Talagrand's contraction lemma; See Lemma 5.7 in \cite{mohri2018foundations}]
Let $\mathcal{G}$ be a family of function from a set $\mathcal{Z}$ to $\mathbb{R}$.
Let $l_1, \ldots, l_N$ be Lipschitz-continuous function from $\mathbb{R} \rightarrow \mathbb{R}$ with Lipschitz constant $L$. Then 
\begin{equation}
    \mathbb{E}_\sigma \left[ \sup_{g \in \mathcal{G}} \frac{1}{N} \sum_{i=1}^N \sigma_i l_i(g(z_i)) \right] \leq L \mathbb{E}_\sigma \left[ \sup_{g \in \mathcal{G}} \frac{1}{N} \sum_{i=1}^N \sigma_i g(z_i) \right].
\end{equation}
\end{lemma}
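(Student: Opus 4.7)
The plan is to establish Talagrand's contraction lemma by a one-coordinate-at-a-time ``peeling'' argument: I will show that a single $h_i(g(z_i))$ inside the Rademacher sum can be replaced by the linear function $L\cdot g(z_i)$ while only weakly increasing the expected supremum, and then iterate this replacement for $i=1,\ldots,N$. Because $\sigma_1,\ldots,\sigma_N$ are independent $\pm 1$ variables, it suffices to perform the replacement for one index, say $i=1$, after conditioning on the remaining $\sigma_j$.

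Concretely, after conditioning on $\sigma_2,\ldots,\sigma_N$, I write $u(g) = \frac{1}{N}\sum_{j\geq 2}\sigma_j h_j(g(z_j))$. Expanding the conditional expectation over $\sigma_1 \in \{+1,-1\}$ gives $\frac{1}{2}\sup_g[u(g)+\frac{1}{N}h_1(g(z_1))] + \frac{1}{2}\sup_{g'}[u(g')-\frac{1}{N}h_1(g'(z_1))]$, which I then combine into a single supremum over two independent copies $(g,g')\in\mathcal{G}\times\mathcal{G}$, namely $\frac{1}{2}\sup_{g,g'}[u(g)+u(g')+\frac{1}{N}(h_1(g(z_1))-h_1(g'(z_1)))]$. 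Applying the Lipschitz condition gives $h_1(g(z_1))-h_1(g'(z_1))\leq L|g(z_1)-g'(z_1)|$. The key step is then to discard the absolute value: since $u(g)+u(g')$ is symmetric under $g\leftrightarrow g'$, the supremum with $L|g(z_1)-g'(z_1)|$ equals the supremum with the signed expression $L(g(z_1)-g'(z_1))$, which factors cleanly as $\sup_g[u(g)+\frac{L}{N}g(z_1)]+\sup_{g'}[u(g')-\frac{L}{N}g'(z_1)]$. Undoing the conditional expectation identity now exactly reproduces a Rademacher average with $h_1(g(z_1))$ replaced by the linear term $L\cdot g(z_1)$.

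To finish, I iterate the same argument on coordinates $i=2,\ldots,N$: the definition of $u(g)$ absorbs the already-linearized terms, but the structural hypotheses (mutual independence of the $\sigma_j$ and Lipschitz continuity of $h_i$) are preserved at each step, so the single-coordinate swap applies verbatim. After $N$ such replacements, every $h_i(g(z_i))$ has been swapped for $L\cdot g(z_i)$, yielding the claimed bound. A useful sanity check is that no assumption $h_i(0)=0$ is needed: any additive constant in $h_i$ contributes $\mathbb{E}_{\sigma_i}[\sigma_i\cdot\mathrm{const}]=0$, so only the Lipschitz constant $L$ actually enters the bound.

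The main obstacle is the symmetrization trick used to discard the absolute value. A direct application of the Lipschitz bound produces the nonlinear term $L|g(z_1)-g'(z_1)|$, which cannot be separated into a sum of suprema over $g$ and $g'$ individually, so the factorization into two one-variable suprema would fail. The resolution, exploiting that swapping $g\leftrightarrow g'$ leaves $u(g)+u(g')$ invariant while flipping the sign of $g(z_1)-g'(z_1)$ so that one may always relabel to make the signed difference nonnegative, is the heart of Ledoux and Talagrand's original argument and is the only nontrivial algebraic manipulation in the proof. Once this step is in place, the remainder is bookkeeping over the $N$ iterations together with standard use of linearity of expectation and the tower property.
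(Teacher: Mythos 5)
Your proof is correct: the one-coordinate peeling with conditioning on the remaining Rademacher variables, the passage to a supremum over pairs $(g,g')$, and the symmetrization trick that lets you replace $L\lvert g(z_1)-g'(z_1)\rvert$ by the signed difference so the supremum factors, is exactly the standard Ledoux--Talagrand argument, and your handling of the iteration and of the irrelevance of $h_i(0)$ is sound. Note that the paper itself offers no proof of this lemma---it is invoked as a black-box citation to Lemma 5.7 of Mohri et al.---so your argument simply reconstructs the proof given in that reference rather than diverging from anything in the paper.
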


We consider $l_i(s) = |\min(1, \max(-1, s)) - \Tr(U^\dagger O U \rho(x_i))|$, $z_i = x_i$, and $\mathcal{G} = \{g_v(z_i) = \langle v, z_i \rangle \, | \, \norm{v} \leq \lceil \norm{w} \rceil \}$.
This choice of functions gives $\epsilon_v(x_i) = l_i(g(z_i))$. Furthermore, $l_i$ is Lipschitz-continuous with Lipschitz constant~$1$. Talagrand's contraction lemma then allows us to bound the formula in Equation~\eqref{eq:radmacall} by
\begin{align}
    & 2 \mathbb{E}_\sigma \left[ \sup_{\norm{v} \leq \lceil \norm{w} \rceil} \frac{1}{N} \sum_{i=1}^N \sigma_i \langle v, \phi(x_i) \rangle \right] + 6 \sqrt{\frac{\log(4 \lceil \norm{w} \rceil^2 / \delta)}{2 N}} \\
    & \leq 2 \mathbb{E}_\sigma \left[ \sup_{\norm{v} \leq \lceil \norm{w} \rceil} \frac{1}{N} \norm{v} \norm{\sum_{i=1}^N \sigma_i \phi(x_i)} \right] + 6 \sqrt{\frac{\log(4 \lceil \norm{w} \rceil^2 / \delta)}{2 N}} \\
    & \leq 2 \lceil \norm{w} \rceil \mathbb{E}_\sigma \left[ \frac{1}{N} \norm{\sum_{i=1}^N \sigma_i \phi(x_i)} \right] + 6 \sqrt{\frac{\log(4 \lceil \norm{w} \rceil^2 / \delta)}{2 N}} \\
    & \leq 2 \frac{\lceil \norm{w} \rceil}{N} \sqrt{\mathbb{E}_\sigma \sum_{i=1}^N \sum_{j=1}^N \sigma_i \sigma_j k(x_i, x_j)} + 6 \sqrt{\frac{\log(4 \lceil \norm{w} \rceil^2 / \delta)}{2 N}}\\
    & \leq 2 \frac{\sqrt{ \lceil \norm{w} \rceil^2 \Tr(K)}}{N} + 6 \sqrt{\frac{\log(4 \lceil \norm{w} \rceil^2 / \delta)}{2 N}} \\
    & \leq 2 \sqrt{\frac{ \lceil \norm{w} \rceil^2 }{N}} + 6 \sqrt{\frac{\log(\lceil \norm{w} \rceil)}{N}} + 6 \sqrt{\frac{\log(4 / \delta)}{2 N}} \\
    & \leq 8 \sqrt{\frac{ \lceil \norm{w} \rceil^2 }{N}} + 6 \sqrt{\frac{\log(4 / \delta)}{2 N}}.
\end{align}
The first inequality uses Cauchy's inequality. The second inequality uses the fact that $\norm{v} \leq \lceil \norm{w} \rceil$. The third inequality uses a Jensen's inequality to move $\mathbb{E}_\sigma$ into the square-root. The fourth inequality uses the fact that $\sigma_i$ are independent and uniform random variable taking $+1, -1$. The fifth inequality uses $\sqrt{x + y} \leq \sqrt{x} + \sqrt{y}, \forall x, y \geq 0$ and our assumption that we rescale $K$ such that $\Tr(K) = N$. The sixth inequality uses the fact that $x^2 \geq \log(x), \forall x \in \mathbb{N}$.

Finally, we plug in the optimal $w$ given in Equation~\eqref{eq:optw-soln}. This allows us to obtain an upper bound of the generalization error:
\begin{equation}
    \mathbb{E}_{x}[\epsilon_w(x)] - \frac{1}{N} \sum_{i=1}^N \epsilon_w(x_i) \leq 8 \frac{ \lceil \sqrt{\Tr(A_{\mathrm{gen}} (U^\dagger O U) \otimes (U^\dagger O U))} \rceil}{\sqrt{N}} + 6 \sqrt{\frac{\log(4 / \delta)}{2 N}},
\end{equation}
where $A_{\mathrm{gen}} = \sum_{i=1}^N \sum_{j=1}^N ((K + \lambda I)^{-1} K (K + \lambda I)^{-1})_{ij} \rho(x_i)\otimes \rho(x_j)$. When $K$ is invertible and $\lambda = 0$, we have $A_{\mathrm{gen}} = \sum_{i=1}^N \sum_{j=1}^N (K^{-1})_{ij} \rho(x_i)\otimes \rho(x_j)$.

\section{Simplified prediction error bound based on dimension and geometric difference}
\label{sec:prooftwo}

In this section, we will show that for quantum kernel methods, we have
\begin{equation} \label{eq:predqkm}
    \mathbb{E}_{x} |h^{\mathrm{Q}}(x) - \Tr(O^U \rho(x))| \leq \mathcal{O}\left( \sqrt{\frac{\min(d, \Tr(O^2))}{N}}\right),
\end{equation}
where $d$ is the dimension of the training set space $d =  \mathrm{dim}(\mathrm{span}(\rho(x_1), \ldots, \rho(x_N)))$.
If we use the quantum kernel method as a reference point, then the prediction error of another machine learning algorithm that produces $h(x)$ using kernel matrix $K$ can be bounded by
\begin{equation}
    \mathbb{E}_{x} |h(x) - \Tr(O^U \rho(x))| \leq \mathcal{O}\left( g \sqrt{\frac{\min(d, \Tr(O^2))}{N}}\right),
\end{equation}
where $g = \sqrt{\norm{\sqrt{K^
\mathrm{Q}} K^{-1} \sqrt{K^\mathrm{Q}}}_\infty}$ assuming the normalization condition $\Tr(K^\mathrm{Q}) = \Tr(K) = N$.

\subsection{Quantum kernel method} \label{sec:QKmethodanalysis}

In quantum kernel method, the kernel function that will be used to train the model is defined using the quantum Hilbert space $k_Q(x, x') = \Tr(\rho(x) \rho(x'))$. Correspondingly, we define the kernel matrix $K^{Q}_{ij} = k_{Q}(x_i, x_j)$. We will focus on $\rho(x)$ being a pure state, so the scaling condition $\Tr(K^{\mathrm{Q}}) = \sum_{i=1}^N k_Q(x_i, x_i) = N$ is immediately satisfied. We also denote the trained model as $h^{\mathrm{Q}}$ for the quantum kernel method.
We now consider an orthonormal basis $\{\sigma_1, \ldots, \sigma_d\}$ for the $d$-dimensional quantum state space formed by the training data $\mathrm{span}\{\rho(x_1), \ldots, \rho(x_N)\}$ under the inner product $\langle \rho, \sigma \rangle = \Tr(\rho \sigma)$. We have $\sigma_p$ is Hermitian, $\Tr(\sigma_p^2) = 1$, but $\sigma_p$ may not be positive semi-definite.

We consider an expansion of $\rho(x_i)$ in terms of $\sigma_p$:
\begin{equation} \label{eq:rhosigma}
    \rho(x_i) = \sum_{p=1}^d \alpha_{i p} \sigma_p,
\end{equation}
where $\alpha \in \mathbb{R}^{N \times d}$. The coefficient $\alpha$ is real as the vector space of Hermitian matrices is over real numbers. Note that multiplying a Hermitian matrix with an imaginary number will not generally result in a Hermitian matrix, hence Hermitian matrices are not a vector space over complex numbers.
We can perform a singular value decomposition on $\alpha = U \Sigma V^\dagger$, where $U \in \mathbb{C}^{N \times d}, \Sigma, V \in \mathbb{C}^{d \times d}$ with $U^\dagger U = I$, $\Sigma$ is diagonal and $\Sigma \succ 0$, $V^\dagger V = V V^\dagger = I$.
Then $K^{\mathrm{Q}} = \alpha \alpha^\dagger = U \Sigma^2 U^\dagger$.
This allows us to explicitly evaluate $A_{\mathrm{tra}}$ and $A_{\mathrm{gen}}$ given in Equation~\eqref{eq:Atra}~and~\eqref{eq:Agen}:
\begin{align}
A_{\mathrm{tra}} & = \lambda^2 \sum_{p=1}^d \sum_{q=1}^d \left(V \frac{\Sigma^2}{(\Sigma^2 + \lambda I)^2} V^\dagger\right)_{pq} \sigma_p \otimes \sigma_q, \\
A_{\mathrm{gen}} & = \sum_{p=1}^d \sum_{q=1}^d \left(V \frac{\Sigma^4}{(\Sigma^2 + \lambda I)^2} V^\dagger\right)_{pq} \sigma_p \otimes \sigma_q,
\end{align}
which can be done by expanding $\rho(x_i)$ in terms of $\sigma_p$.
Because $\Sigma \succ 0$, when we take the limit of $\lambda \rightarrow 0$, we have $A_{\mathrm{tra}} = 0$ and $A_{\mathrm{gen}} = \sum_{p=1}^d \sum_{q=1}^d \delta_{pq} \sigma_p \otimes \sigma_q = \sum_{p=1}^d \sigma_p \otimes \sigma_p$. Hence $\Tr(A_{\mathrm{tra}} O^U \otimes O^U) = 0$ and $\Tr(A_{\mathrm{gen}} O^U \otimes O^U) = \sum_{p=1}^d \Tr(\sigma_p O^U)^2$. From Equation~\eqref{eq:mainprederr} with $\lambda \rightarrow 0$, we have
\begin{equation} \label{eq:prederrQK}
    \mathbb{E}_{x} |h^{Q}(x) - \Tr(O^U \rho(x))| \leq \mathcal{O}\left(\sqrt{\frac{\sum_{p=1}^d \Tr( O^U \sigma_p)^2}{N}} + \sqrt{\frac{\log(1 / \delta)}{N}}\right).
\end{equation}
Because $\{\sigma_1, \ldots, \sigma_k\}$ forms an orthonormal set in the space of $2^n \times 2^n$ Hermitian matrices, $\sum_{p=1}^d \Tr( O^U \sigma_p)^2$ is the Frobenius norm of the observable $O^U$ restricted to the subspace $\mathrm{span}\{\sigma_1, \ldots, \sigma_k\}$.

We now focus on obtaining an informative upper bound on how large $\sum_{p=1}^d \Tr( O^U \sigma_p)^2$ could be. First, because we can extend the subspace $\mathrm{span}\{\sigma_1, \ldots, \sigma_k\}$ to the full Hilbert space $\mathrm{span}\{\sigma_1, \ldots \sigma_{4^n}\}$, we have $\sum_{p=1}^d \Tr( O^U \sigma_p)^2 \leq \sum_{p=1}^{4^n} \Tr( O^U \sigma_p)^2 =  \Tr((O^U)^2) = \norm{O^U}_F^2$.
Next, we will show that $\sum_{p=1}^d \Tr( O^U \sigma_p)^2 \leq d \norm{O^U}_\infty^2 \leq d$, where $\norm{O^U}_\infty$ is the spectral norm of the observable $O^U$. We pick a linearly-independent set of $\{\rho_1, \ldots, \rho_k\}$ from $\{\rho(x_1), \ldots \rho(x_N)\}$. We assume that all the quantum states are pure, hence we have $\rho_i = \ket{\psi_i}\!\bra{\psi_i}, \forall i = 1, \ldots, d$.
The pure states $\{\ket{\psi_1}, \ldots, \ket{\psi_k}\}$ may not be orthogonal, so we perform a Gram-Schmidt process to create an orthonormal set of quantum states $\{\ket{\phi_1}, \ldots, \ket{\phi_k}\}$.
Because $\rho_i$ are linear combination of $\ket{\phi_q} \bra{\phi_r}, \forall q, r = 1, \ldots, d$, we have
\begin{equation}
    \sigma_p = \sum_{q=1}^d \sum_{r=1}^d s_{pqr} \ket{\phi_q}\!\bra{\phi_r}, \forall p = 1, \ldots, d.
\end{equation}
The condition $\Tr(\sigma_p \sigma_{p'}) = \delta_{p p'}$ implies that $\sum_{q=1}^d \sum_{r=1}^d s_{pqr} s_{p' qr} = \delta_{p p'}$.
If we view $s$ as a vector $\vec{s}$ of size $d^2$, then $\langle \vec{s}_p, \vec{s}_{p'} \rangle = \delta_{p p'}$. Thus $\{\vec{s}_1, \ldots, \vec{s}_k\}$ forms a set of orthonormal vectors in $\mathbb{R}^{d^2}$, which implies $\sum_{p=1}^d \vec{s}_p \vec{s}_p^\dagger \preceq I$.
Let us define the projection operator $P = \sum_{q=1}^d \ket{\phi_q}\!\bra{\phi_q}$.
We will also consider a vector $\vec{o} \in \mathbb{R}^{d^2}$, where $\vec{o}_{qr} = \bra{\phi_r} O^U \ket{\phi_q}$.
We have 
\begin{align}
    \sum_{p=1}^d \Tr( O^U \sigma_p)^2 & = \sum_{p=1}^d \left( \sum_{q=1}^d \sum_{r=1}^d s_{pqr} \bra{\phi_r} O^U \ket{\phi_q} \right)^2 = \sum_{p=1}^d \vec{o}^\dagger \vec{s}_p \vec{s}_p^\dagger \vec{o} \\
     &  \leq \vec{o}^\dagger \vec{o} = \left( \sum_{q=1}^d \sum_{r=1}^d \bra{\phi_r} O^U \ket{\phi_q} \right)^2 = \norm{P O^U P}_F^2.
\end{align}
The inequality comes from the fact that $\sum_{p=1}^d \vec{s}_p \vec{s}_p^\dagger \preceq I$.
With a proper choice of basis, one could view $P O^U P$ as an $d \times d$ matrix.
Hence $\norm{P O^U P}_F \leq \sqrt{d} \norm{P O^U P}_\infty \leq \sqrt{d} \norm{O^U}_\infty$.
This established the fact that $\sum_{p=1}^d \Tr( O^U \sigma_p)^2 \leq d \norm{O^U}_\infty^2 \leq d$.
Combining with Equation~\eqref{eq:prederrQK}, we have
\begin{equation}
    \mathbb{E}_{x} |h^{Q}(x) - \Tr(O^U \rho(x))| \leq \mathcal{O}\left(\sqrt{\frac{\min(d, \norm{O^U}_F^2)}{N}} + \sqrt{\frac{\log(1 / \delta)}{N}}\right).
\end{equation}
This elucidates the fact that the prediction error of a quantum kernel method is bounded by minimum of the dimension of the quantum subspace formed by the training set and the Frobenius norm of the observable $O^U$.

Choosing a small but non-zero $\lambda$ allows us to consider an approximate space of $\mathrm{span}\{\rho(x_1), \ldots, \rho(x_N)\}$ formed by the training set. The training error $\sqrt{\Tr(A_{\mathrm{tra}} O^U \otimes O^U) / N}$ would increase slightly, and the generalization error $\sqrt{\Tr(A_{\mathrm{gen}} O^U \otimes O^U) / N}$ would reflect the Frobenius norm of $O^U$ restricted to a smaller subspace, which only contains the principal components of the space formed by the training set. This would be a better choice when most states lie in low-dimensional subspace with small random fluctuations.
One may also consider training a machine learning model with truncated kernel matrix $K_\lambda$, where all singular values below $\lambda$ are truncated. This makes the act of restricting to an approximate subspace more explicit.

\subsection{Another machine learning method compared to quantum kernel}
\label{sec:MLother}

We now consider an upper bound on the prediction error using the quantum kernel method as a reference point for some machine learning algorithm.
For the following discussion, we consider classical neural networks with large hidden sizes.
The function generated by a classical neural network with large hidden size after training is equivalent to the function $h(x)$ given in Equation~\eqref{eq:optw-soln} with $\lambda = 0$ and with a special kernel function $k_{\mathrm{NTK}}(x, x')$ known as the neural tangent kernel (NTK) \cite{jacot2018neural}.
The precise definition of $k_{\mathrm{NTK}}(x, x')$ depends on the architecture of the neural network. For example, a two-layer feedforward neural network (FNN), a three-layer FNN, or some particular form of convolutional neural network (CNN) all correspond to different $k_{\mathrm{NTK}}(x, x')$.
Given the kernel $k_{\mathrm{NTK}}(x, x')$, we can define the kernel matrix $\tilde{K}_{ij} = k_{\mathrm{NTK}}(x_i, x_j)$.
For neural tangent kernel, the scaling condition $\Tr(\tilde{K}) = \sum_{i=1}^N k_{\mathrm{NTK}}(x_i, x_i) = N$ may not be satisfied. Hence, we define a normalized kernel matrix $K = N \tilde{K} / \Tr(\tilde{K})$.
When $\lambda = 0$, the trained machine learning model (given in Equation~\eqref{eq:optw-soln}) under the normalized matrix $K$ and the original matrix $\tilde{K}$ are the same.
In order to apply Theorem~\ref{thm:main}, we will use the normalized kernel matrix $K$ for the following discussion.
From Equation~\eqref{eq:mainprederr} with $\lambda = 0$, we have
\begin{equation} \label{eq:NNprederr}
    \mathbb{E}_{x} |h(x) - \Tr(O^U \rho(x))| \leq \mathcal{O}\left(\sqrt{\frac{\Tr(A O^U \otimes O^U)}{N}} + \sqrt{\frac{\log(1 / \delta)}{N}}\right),
\end{equation}
where $A = \sum_{i=1}^N \sum_{j=1}^N (K^{-1})_{ij} \rho(x_i)\otimes \rho(x_j)$.
Using Equation~\eqref{eq:rhosigma} on the expansion of $\rho(x_i)$, we have
\begin{align}
    A & = \sum_{p=1}^d \sum_{q=1}^d \sum_{i=1}^N \sum_{j=1}^N (K^{-1})_{ij} \alpha_{ip} \alpha_{jq} \sigma_p \otimes \sigma_q \\
    & = \sum_{p=1}^d \sum_{q=1}^d ( \alpha^\dagger K^{-1} \alpha)_{pq} \sigma_p \otimes \sigma_q.
\end{align}
Using the definition of spectral norm, we have
\begin{align}
\Tr(A O^U \otimes O^U) & = \sum_{p=1}^d \sum_{q=1}^d ( \alpha^\dagger K^{-1} \alpha)_{pq} \Tr(\sigma_p O^U) \Tr(\sigma_q O^U) \\
& \leq \norm{\alpha^\dagger K^{-1} \alpha}_\infty \sum_{p=1}^d \Tr(O^U \sigma_p)^2.
\end{align}
Recall from the definition below Equation~\eqref{eq:rhosigma}, we have
\begin{equation}
\alpha = U\Sigma V^\dagger, K^{\mathrm{Q}} = \alpha \alpha^\dagger = U \Sigma^2 U^\dagger.    
\end{equation}
Using the fact that orthogonal transformation do not change the spectral norm, $\norm{\alpha^\dagger K^{-1} \alpha}_\infty = \norm{\Sigma U^\dagger K^{-1} U \Sigma}_\infty = \norm{U \Sigma U^\dagger K^{-1} U \Sigma U^\dagger}_\infty = \norm{\sqrt{K^{\mathrm{Q}}} K^{-1} \sqrt{K^{\mathrm{Q}}}}_\infty$.
Hence
\begin{equation}
    \Tr(A O^U \otimes O^U) \leq \norm{\sqrt{K^{\mathrm{Q}}} K^{-1} \sqrt{K^{\mathrm{Q}}}}_\infty \sum_{p=1}^d \Tr(O^U \sigma_p)^2.
\end{equation}
Together with Equation~\eqref{eq:NNprederr}, we have the following prediction error bound 
\begin{equation}
    \mathbb{E}_{x} |h(x) - \Tr(O^U \rho(x))| \leq \mathcal{O}\left(g \sqrt{\frac{\sum_{p=1}^d \Tr(O^U \sigma_p)^2}{N}} + \sqrt{\frac{\log(1 / \delta)}{N}}\right),
\end{equation}
where $g = \sqrt{\norm{\sqrt{K^{\mathrm{Q}}} K^{-1} \sqrt{K^{\mathrm{Q}}}}_\infty}$.
The scalar $g$ measures the closeness of the geometry between the training data points defined by classical neural network and quantum state space.
Note that without the geometric scalar $g$, this prediction error bound is the same as Equation~\eqref{eq:prederrQK} for the quantum kernel method.
Hence, if $g$ is small, classical neural network could predict as well (or potentially better) as the quantum kernel method.
The same analysis in Section~\ref{sec:QKmethodanalysis} allows us to arrive at the following result
\begin{equation}
    \mathbb{E}_{x} |h(x) - \Tr(O^U \rho(x))| \leq \mathcal{O}\left(g \sqrt{\frac{\min(d, \norm{O^U}_F^2)}{N}} + \sqrt{\frac{\log(1 / \delta)}{N}}\right).
\end{equation}
The same analysis holds for other machine learning algorithms, such as Gaussian kernel regression.

\section{Detailed discussion on the relevant quantities s, d, and g}
\label{app:sdgdetail}

There are some important aspects on the three relevant quantities $s, d, g$ that were not fully discussed in the main text, including the limit when we have infinite amount of data and the effect of regularization.  While in  practice one always has a finite amount of data, constructing these formal limits both clarifies the construction and provides another perspective through which to understand the finite data constructions. This section will provide a detailed discussion of these aspects.
 
\subsection{Model complexity s}
\label{sec:modelcomp}

While we have used $s_K(N) = \sum_{i=1}^N \sum_{j=1}^N (K^{-1})_{ij} \Tr(O^U \rho(x_i)) \Tr(O^U \rho(x_j))$ in the main text, this is a simplified quantity when we do not apply regularization. The model complexity $s_K(N)$ under regularization is given by
\begin{align}
s_K(N) = \norm{w}^2 = \Tr(A_{\mathrm{gen}} O^U \otimes O^U) &= \sum_{i=1}^N \sum_{j=1}^N ((K+\lambda I)^{-1} K (K+\lambda I)^{-1})_{ij} \Tr(O^U \rho(x_i)) \Tr(O^U \rho(x_j))\\
&= \sum_{i=1}^N \sum_{j=1}^N (\sqrt{K} (K+\lambda I)^{-2} \sqrt{K})_{ij} \Tr(O^U \rho(x_i)) \Tr(O^U \rho(x_j)).
\end{align}
Training machine learning model with regularization is often desired when we have a finite number $N$ of training data. $\norm{w}^2$ has been used extensively in regularizing machine learning models, see e.g., \cite{krogh1992simple, cortes1995support, suykens1999least}.
This is because we can often significantly reduce generalization error $\sqrt{ \Tr(A_{\mathrm{gen}} O^U \otimes O^U) /N}$ by slightly increasing the training error $\sqrt{ \Tr(A_{\mathrm{tra}} O^U \otimes O^U) / N}$.
In practice, we should choose the regularization parameter $\lambda$ to be a small number such that the training error plus the generalization error is minimized.

The model complexity $s_K(N)$ we have been calculating can be seen as an approximation to the true model complexity when we have a finite number $N$ of training data.
If we have exact knowledge about the input distribution given as a probability measure $\mu_x$, we can also write down the precise model complexity in the reproducing kernel Hilbert space $\phi(x)$ where $k(x, y) = \phi(x)^\dagger \phi(y)$.
Starting from
\begin{equation}
    \min_{w} \lambda w^\dagger w + \int \norm{w^\dagger \phi(x) - \Tr(O^U \rho(x))}^2 d\mu_x,
\end{equation}
we can obtain
\begin{equation}
    w = \left(\lambda I + \int \phi(x)\phi(x)^\dagger d\mu_x\right)^{-1} \int \Tr(O^U \rho(x)) \phi(x) d\mu_x.
\end{equation}
Hence the true model complexity is
\begin{align}
    \norm{w}^2 &= \int \int d\mu_{x_1} d\mu_{x_2} \Tr(O^U \rho(x_1)) \Tr(O^U \rho(x_2)) \phi(x_1)^\dagger \left(\lambda I + \int \phi(\xi)\phi(\xi)^\dagger d\mu_\xi\right)^{-2} \phi(x_2)\\
    &= \Tr(A_{\mathrm{gen}} O^U \otimes O^U),
\end{align}
where the operator
\begin{equation}
    A_{\mathrm{gen}} = \int \int d\mu_{x_1} d\mu_{x_2} \phi(x_1)^\dagger \left(\lambda I + \int \phi(\xi)\phi(\xi)^\dagger d\mu_\xi\right)^{-2} \phi(x_2) \,\, \rho(x_1) \otimes \rho(x_2).
\end{equation}
If we replace the integration over the probability measure with $N$ random samples and apply the fact that $k(x, y) = \phi(x)^\dagger \phi(y)$, then we can obtain the original expression given in Equation~\eqref{eq:Agen}.

\subsection{Dimension d} \label{app:dim-detail}

The dimension we considered in the main text is the effective dimension of the training set quantum state space.
This can be seen as the rank of the quantum kernel matrix $K^Q_{ij} = \Tr(\rho(x_i) \rho(x_j))$.
However, it will often be the case that most of the states lie in some low-dimensional subspace, but have negligible contributions in a much higher dimensional subspace.
In this case, the dimension of the low-dimensional subspace is the better characterization.
More generally, we can perform a singular value decomposition of $K^{\mathrm{Q}}$
\begin{equation} \label{eq:Kqspec}
    K^{\mathrm{Q}} = \sum_{i=1}^N t_i u_i u_i^\dagger,
\end{equation}
with $t_1 \geq t_2 \geq \ldots \geq t_N$.
We define $\sigma_i = \sum_{j=1}^N u_{ij} \rho(x_j) / \norm{\sum_{j=1}^N u_{ij} \rho(x_j)}_F$, where $\norm{\cdot}_F$ is the Frobenius norm. $\sigma_i$ is the $i$-th principal component of the quantum state space.
Recall the normalization condition $\Tr(K^{\mathrm{Q}}) = N$, so $\sum_{i=1}^N t_i = N$.
If the training set quantum state space is one-dimensional ($d = 1$), then
\begin{equation}
    t_1 = N, t_i = 0, \forall i > 1.
\end{equation}
If all the quantum states in the training set are orthogonal ($d = N$), then
\begin{equation}
    t_i = 1, \forall i = 1, \ldots, N.
\end{equation}
By the Eckart-Young-Mirsky theorem, for any $k \geq 1$, the first $k$ principal components $\sigma_1, \ldots, \sigma_k$ form the best $k$-dimensional subspace for approximating $\mathrm{span}\{\rho(x_1), \ldots, \rho(x_N)\}$.
The approximation error is given by
\begin{equation}
    \sum_{i=1}^N \norm{\rho(x_i) - \sum_{j=1}^k \sqrt{t_j} u_{ji} \sigma_j}_F^2 = \sum_{l = k+1}^N t_l.
\end{equation}
As we can see, when the spectrum is flatter, the dimension is larger.
The error decreases at most as $\sum_{l=k}^N t_l \leq N-k$, where the equality holds when all states are orthogonal.
In the numerical experiment, we choose the following measure as the approximate dimension
\begin{equation}
    1 \leq \sum_{k=1}^N \left(\frac{1}{N-k} \sum_{l=k}^N t_l\right) \leq N
\end{equation}
due to the independence to any hyperparameter.
Alternatively, we can also define approximate dimension by choosing the smallest $k$ such that $\sum_{l = k+1}^N t_l / N < \epsilon$ for some $\epsilon > 0$.
Both give similar trend, but the actual value of the dimension would be different.

From the discussion, we can see that in the above definitions, the dimension will always be upper bounded by the number $N$ of training data.
Similar to the case of model complexity, we can also define the dimension $d$ when we have the exact knowledge about the input distribution given by probability measure $\mu_x$.
For a quantum state space representing $n$ qubits, we simply consider the spectrum $t_1 \geq t_2 \geq \ldots \geq t_{2^n}$ of the following operator
\begin{equation}
    \int \mathrm{vec}({\rho}(x)) \mathrm{vec}({\rho}(x))^T d\mu_x.
\end{equation}
When we replace the integration by a finite number of training samples, the spectrum would be equivalent to the spectrum given in Equation~\eqref{eq:Kqspec} except for the additional zeros.

\begin{remark}
The same definition of dimension can be used for any kernels, such as projected quantum kernels or neural tangent kernels (under the normalization $\Tr(K) = N$).
\end{remark}

\subsection{Geometric difference g} \label{app:geo-detail}

The geometric difference is defined between two kernel functions $K^1, K^2$ and the corresponding reproducing kernel Hilbert space $\phi_1(x), \phi_2(x)$.
If we have a function represented by the first kernel $w^\dagger \phi_1(x)$, what would be the model complexity for the second kernel? We consider the ideal case where we know the input distribution $\mu_x$ exactly. The optimization for training the first kernel method with regularization $\lambda > 0$ is
\begin{equation} \label{eq:vopt}
    \min_{v} \lambda v^\dagger v + \int \norm{v^\dagger \phi_2(x) - w^\dagger \phi_1(x)}^2 d\mu_x.
\end{equation}
The solution is given by
\begin{equation}
    v = \left(\lambda I + \int \phi_2(x)\phi_2(x)^\dagger d\mu_x\right)^{-1} \int w^\dagger \phi_1(x) \phi_2(x) d\mu_x.
\end{equation}
Hence the model complexity for the optimized $v$ is
\begin{align}
    \norm{v}^2 &= w^\dagger \left(\int \int d\mu_{x_1} d\mu_{x_2} \phi_1(x_1) \phi_2(x_1)^\dagger \left(\lambda I + \int \phi_2(\xi)\phi_2(\xi)^\dagger d\mu_\xi\right)^{-2} \phi_2(x_2) \phi_1(x_2)^\dagger\right) w\\
    &\leq g_{\mathrm{gen}}^2 \norm{w}^2,
\end{align}
where the geometric difference is
\begin{equation}
    g_{\mathrm{gen}} = \sqrt{\norm{\int \int d\mu_{x_1} d\mu_{x_2} \phi_1(x_1) \phi_2(x_1)^\dagger \left(\lambda I + \int \phi_2(\xi)\phi_2(\xi)^\dagger d\mu_\xi\right)^{-2} \phi_2(x_2) \phi_1(x_2)^\dagger}_\infty}.
\end{equation}
The subscript in $g_{\mathrm{gen}}$ is added because when $\lambda > 0$, there will also be a contribution from training error.
When we only have a finite number $N$ of training samples, we can use the fact that $k(x, y) = \phi(x)^\dagger \phi(y)$ and the definition that $K_{ij} = k(x_i, x_j)$ to obtain
\begin{equation} \label{eq:glambda}
    g_{\mathrm{gen}} = \sqrt{\norm{\sqrt{K^1} \sqrt{K^2} \left(K^2 + \lambda I\right)^{-2} \sqrt{K^2} \sqrt{K^1}}_\infty}.
\end{equation}
This formula differs from the main text due to the regularization parameter $\lambda$. If $\lambda = 0$, then the above formula for $g_{\mathrm{gen}}$ reduces to the formula $g_{\mathrm{gen}} = \sqrt{\norm{\sqrt{K^1} (K^2)^{-1} \sqrt{K^1}}_\infty}$.

When $\lambda$ is non-zero, the geometric difference can become much smaller. This is the same as the discussion on model complexity $s$ in Section~\ref{sec:modelcomp}.
However, a nonzero $\lambda$ induces a small amount of training error.
For a finite number $N$ of samples, the training error can always be upper bounded:
\begin{equation}
    \frac{1}{N}\sum_{i=1}^N \norm{v^\dagger \phi_2(x_i) - w^\dagger \phi_1(x_i)}^2 \leq \lambda^2 \norm{\sqrt{K^1} (K^2 + \lambda I)^{-2} \sqrt{K^1}}_\infty \norm{w}^2 = g_{\mathrm{tra}}^2 \norm{w}^2,
\end{equation}
where $g_{\mathrm{tra}} = \lambda \sqrt{\norm{\sqrt{K^1} (K^2 + \lambda I)^{-2} \sqrt{K^1}}_\infty}$.
This upper bound can be obtained by plugging the solution for $v$ in Equation~\eqref{eq:vopt} under finite samples into the training error $\frac{1}{N}\sum_{i=1}^N \norm{v^\dagger \phi_2(x_i) - w^\dagger \phi_1(x_i)}^2$ and utilizing the fact that $w^\dagger A w \leq \norm{A}_\infty \norm{w}^2$.
In the numerical experiment, we report $g_{\mathrm{gen}}$ given in Equation~\eqref{eq:glambda} with the largest $\lambda$ such that the training error $g_{\mathrm{tra}} \leq 0.045$.

\section{Constructing dataset to separate quantum and classical model}
\label{app:engdata}
In the main text, our central quantity of interest is the geometric difference $g$, which provides a quantification for a given data set, how large the prediction gap can be for possibly function or labels associated with that data.  Here we detail how one can efficiently construct a function that saturates this bound for a given data set.  This is the approach that is used in the main text to engineer the data set with maximal performance.

Given a (projected) quantum kernel $k^{\mathrm{Q}}(x_i, x_j) = \phi^{\mathrm{Q}}(x_i)^\dagger \phi^{\mathrm{Q}}(x_j)$ and a classical kernel $k^{\mathrm{C}}(x_i, x_j) = \phi^{\mathrm{C}}(x_i)^\dagger \phi^{\mathrm{C}}(x_j)$, our goal is to construct a dataset that would best separate the two models.
Consider a dataset $(x_i, y_i), \forall i =1, \ldots, N$.
We use the model complexity $s = \sum_{i=1}^N \sum_{j=1}^N (K^{-1})_{ij} y_i y_j$ to quantify the generalization error of the model.
The model complexity has been introduced in the main text, where a detailed proof relating $s$ to prediction error is given in Appendix~\ref{sec:proofmain}.
To separate between quantum and classical model, we consider $s_{\mathrm{Q}} = 1$ and $s_{\mathrm{C}}$ is as large as possible for a particular choice of targets $y_1, \ldots, y_N$.
To achieve this, we solve the optimization
\begin{equation}
    \min_{y \in \mathbb{R}^N} \frac{\sum_{i=1}^N \sum_{j=1}^N ((K^{\mathrm{C}})^{-1})_{ij} y_i y_j}{\sum_{i=1}^N \sum_{j=1}^N ((K^{\mathrm{Q}})^{-1})_{ij} y_i y_j}
\end{equation}
which has an exact solution given by a generalized eigenvalue problem.  The solution is given by $y = \sqrt{K^{\mathrm{Q}}} v$, where $v$ is the eigenvector of $\sqrt{K^{\mathrm{Q}}} (K^{\mathrm{C}})^{-1} \sqrt{K^{\mathrm{Q}}}$ corresponding to the eigenvalue $g^2 = \norm{\sqrt{K^{\mathrm{Q}}} (K^{\mathrm{C}})^{-1} \sqrt{K^{\mathrm{Q}}}}_\infty$.
This guarantees that $s_{\mathrm{C}} = g^2 s_{\mathrm{Q}} = g^2$, and note that by definition of $g$, $s_{\mathrm{C}} \leq g^2 s_{\mathrm{Q}}$. Hence this dataset fully utilized the geometric difference between the quantum and classical space.

We should also include regularization parameter $\lambda$ when constructing the dataset.
Detailed discussion on model complexity $s$ and geometric difference $g$ with regularization is given in Appendix~\ref{app:sdgdetail}.
Recall that for $\lambda > 0$,
\begin{equation}
    s^\lambda_C = y^\dagger (\sqrt{K^{\mathrm{C}}} \left(K^{\mathrm{C}} + \lambda I\right)^{-2} \sqrt{K^{\mathrm{C}}})_{ij} y,
\end{equation}
which is the model complexity that we want to maximize.
Similar to the unregularized case, we consider the (unregularized) model complexity $s_{\mathrm{Q}} = y^\dagger (K^{\mathrm{Q}})^{-1} y$ to be one.
Solving the generalized eigenvector problem yields the target $y = \sqrt{K^{\mathrm{Q}}} v$, where $v$ is the eigenvector of
\begin{equation}
    \sqrt{K^{\mathrm{Q}}} \sqrt{K^{\mathrm{C}}} \left(K^{\mathrm{C}} + \lambda I\right)^{-2} \sqrt{K^{\mathrm{C}}} \sqrt{K^{\mathrm{Q}}}
\end{equation}
with the corresponding eigenvalue
\begin{equation}
    g_{\mathrm{gen}}^2 = \norm{\sqrt{K^{\mathrm{Q}}} \sqrt{K^{\mathrm{C}}} \left(K^{\mathrm{C}} + \lambda I\right)^{-2} \sqrt{K^{\mathrm{C}}} \sqrt{K^{\mathrm{Q}}}}_\infty.
\end{equation}
The larger $\lambda$ is, the smaller $g_{\mathrm{gen}}^2$ would be.
In practice, one should choose a $\lambda$ such that the training error bound $g_{\mathrm{tra}}^2 s_{\mathrm{Q}} = \lambda^2 \norm{\sqrt{K^{\mathrm{Q}}} (K^{\mathrm{C}} + \lambda I)^{-2} \sqrt{K^{\mathrm{Q}}}}_\infty$ for the classical ML model is small enough. In the numerical experiment, we choose a $\lambda$ such that the training error bound $g_{\mathrm{tra}}^2 s_{\mathrm{Q}} \leq 0.002$ and $g_{\mathrm{gen}}$ is as large as possible.
Finally, we can turn this dataset, which maps input $x$ to a real value $y$, into a classification task by replacing $y$ with $+1$ if $y > \mathrm{median}(y_1, \ldots, y_N)$ and $-1$ if $y \leq \mathrm{median}(y_1, \ldots, y_N)$.

The constructed dataset will yield the largest separation between quantum and classical models from a learning-theoretic sense, as the model complexity fully saturates the geometric difference.
If there is no quantum advantage in this dataset, there will likely be none.
We believe this construction procedure will eventually lead to the first quantum advantage in machine learning problems (classification problems to be more specific).

\section{Lower bound on learning quantum models}
\label{app:lowerbound}

In this section, we will prove a fundamental lower bound for learning quantum models stated in Theorem~\ref{thm:lowerbound}. This result says that in the worst case, the number $N$ of training data has to be at least $\Omega(\Tr(O^2) / \epsilon^2)$ when the input quantum state can be distributed across a sufficiently large Hilbert space. Quantum kernel method matches this lower bound. When the data spans over the entire Hilbert space, the dimension $d$ will be large and the prediction error of the quantum kernel method given in Equation~\eqref{eq:predqkm} becomes
\begin{equation}
    \mathbb{E}_{x} |h^{\mathrm{Q}}(x) - \Tr(O^U \rho(x))| \leq \mathcal{O}\left( \sqrt{\frac{ \Tr(O^2)}{N}}\right).
\end{equation}
Hence we can achieve $\epsilon$ error using $N \leq \mathcal{O}(\Tr(O^2) / \epsilon^2)$ matching the fundamental lower bound.

\begin{theorem} \label{thm:lowerbound}
Consider any learning algorithm $\mathcal{A}$. Suppose for any unknown unitary evolution $U$, any unknown observable $O$ with bounded Frobenius norm $\Tr(O^2) \leq B$, and any distribution $D$ over the input quantum states, the learning algorithm $\mathcal{A}$ could learn a function $h$ such that
\begin{equation}
    \E_{\rho \sim D} |h(\rho) - \Tr(O U \rho U^\dagger)| \leq \epsilon,
\end{equation}
from $N$ training data $(\rho_i, \Tr(O U \rho_i U^\dagger)), \forall i = 1, \ldots, N$ with high probability.
Then we must have
\begin{equation}
    N \geq \Omega(B / \epsilon^2).
\end{equation}
\end{theorem}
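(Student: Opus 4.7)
The plan is to prove the lower bound by a standard information-theoretic reduction: construct a hard family of instances via the probabilistic method so that any algorithm with too few samples must fail on an overwhelming fraction of the family. Concretely, I would take $U = I$, pick a Hilbert space of dimension $d$ to be chosen, and let $D$ be the uniform distribution over the computational basis states $\{|i\rangle\!\langle i|\}_{i=1}^d$. For each sign vector $\sigma \in \{\pm 1\}^d$, define the diagonal observable
\begin{equation}
O_\sigma = \sqrt{B/d}\,\sum_{i=1}^d \sigma_i\,|i\rangle\!\langle i|,
\end{equation}
so that $\Tr(O_\sigma^2) = B$ exactly. Under this choice, the label for input $|i\rangle\!\langle i|$ is $\sigma_i\sqrt{B/d}$, and the learning problem reduces to the classical task of estimating a random Rademacher vector $\sigma$ from $N$ uniformly random coordinate reads $(i_k,\sigma_{i_k})$.

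Next I would apply Yao's minimax principle by placing a uniform prior on $\sigma\in\{\pm 1\}^d$ and lower-bounding the Bayes risk of any (possibly randomized) deterministic learner. The key observation is independence: for any index $j$ that does not appear among the $N$ training indices $\{i_1,\ldots,i_N\}$, the coordinate $\sigma_j$ is independent of the training data, so the Bayes-optimal predictor is $f(j)=0$ and incurs expected error exactly $\sqrt{B/d}$ at that $j$. Over a fresh test point $j\sim D$, the probability that $j\notin\{i_1,\ldots,i_N\}$ is at least $(1-1/d)^N \geq 1 - N/d$, and therefore
\begin{equation}
\Expect_{\sigma}\,\Expect_{\text{train}}\,\Expect_{j\sim D}\,\bigl|f(j) - \sigma_j\sqrt{B/d}\bigr| \;\geq\; (1 - N/d)\sqrt{B/d}.
\end{equation}
Choosing $d = \lceil B/(c\epsilon^2)\rceil$ for a small constant $c$ makes $\sqrt{B/d} = \Theta(\epsilon)$, so the right-hand side is $\Omega(\epsilon)$ whenever $N \leq d/2 = \Omega(B/\epsilon^2)$.

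The main obstacle, and the step requiring the most care, is converting this average-case Bayes bound into the ``with high probability'' failure statement demanded by the theorem. I would handle this by exploiting that the per-instance test error is deterministically bounded by $\max_\rho|\Tr(O_\sigma\rho)| = \sqrt{B/d} = O(\epsilon)$; combined with the averaged lower bound above, a standard reverse Markov argument (applied first over $\sigma$ to extract a specific hard $O_{\sigma^\star}$, then over the training randomness) shows that for any $\mathcal{A}$ with $N < \Omega(B/\epsilon^2)$ there is a fixed choice of $(U,O,D)$ on which $\Pr_{\text{train}}[\text{err}>\epsilon]$ is bounded away from zero, contradicting the success assumption. Finally, I would note that this matches the upper bound for the quantum kernel method derived in Appendix~\ref{sec:QKmethodanalysis}, since for inputs spanning a high-dimensional subspace the prediction error scales as $\sqrt{\Tr(O^2)/N}$, establishing the optimality claim made in the main text.
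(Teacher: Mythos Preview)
Your construction is essentially identical to the paper's: fix $U=I$, take $D$ uniform on computational basis states of a $d$-dimensional space with $d=\Theta(B/\epsilon^2)$, and use diagonal observables $O_\sigma$ indexed by a sign vector $\sigma\in\{\pm1\}^d$ with entries scaled so that $\Tr(O_\sigma^2)=B$. The difference lies in how the lower bound on $N$ is extracted from this family. The paper runs an information-theoretic argument: view the learning task as a communication protocol, use the high-probability $\epsilon$-accuracy guarantee directly (together with Markov's inequality) to build a decoder $\tilde v$ that recovers at least $d/2$ bits of $v$, and then chain $N\geq I(v;\mathcal{T})\geq I(v;\tilde v)\geq\Omega(d)$ via data processing and Fano's inequality. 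In contrast, you compute the Bayes risk directly by noting that unseen coordinates are independent of the training set, which is more elementary and sidesteps Fano entirely; the price is the extra reverse-Markov step to upgrade from an averaged-risk bound to a per-instance high-probability failure. The paper's route handles the high-probability part more cleanly because it invokes the success guarantee \emph{before} passing to an averaged quantity, whereas your route requires the boundedness of the loss. On that point, your claim that the per-instance test error is bounded by $\sqrt{B/d}$ needs the caveat that the learner may be assumed without loss of generality to clip its output to $[-\sqrt{B/d},\sqrt{B/d}]$, after which the pointwise error is at most $2\sqrt{B/d}$; without this clipping the reverse-Markov step would not go through. With that fix, both arguments are correct and yield the same $N\geq\Omega(B/\epsilon^2)$.
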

\begin{proof}
We select a Hilbert space with dimension $d = B / 4 \epsilon^2$ (this could be a subspace of an exponentially large Hilbert space). We define the distribution $D$ to be the uniform distribution over the basis states $\ket{x}\!\bra{x}$ of the $d$-dimensional Hilbert space.
Then we consider the unknown unitary $U$ to always be the identity, while the possible observables are
\begin{equation}
    O_v = 2 \epsilon \sum_{x = 1}^{d} v_x \ket{x}\!\bra{x},
\end{equation}
with $v_x \in \{\pm 1\}, \forall x = 1, \ldots, d$.
There are hence $2^d$ different choices of observables $O_v$.

We now set up a simple communication protocol to prove the lower bound on the number of data needed. This is a simplified version of the proofs found in Refs.\cite{haah2017sample, Huang2020}.
Alice samples an observable $O_v$ uniformly at random from the $2^d$ possible choices.
We can treat $v$ as a bit-string of $d$ entries.
Then she samples $N$ quantum states $\ket{x_i}\!\bra{x_i}, \forall i = 1, \ldots, N$.
Alice then gives Bob the following training data $\mathcal{T} = \{(\ket{x_i}\!\bra{x_i}, \bra{x_i}O_v \ket{x_i}) = v_{x_i}, \forall i = 1, \ldots, N\}$.
Notice that the mutual information $I(v, \mathcal{T})$ between $v$ and the training data $\mathcal{T}$ satisfies
\begin{equation}
I(s, \mathcal{T}) \leq N,\label{eq:Iupper}    
\end{equation}
because the training data contains at most $N$ values of $s$.

With high probability, the following is true by the requirement of the learning algorithm $\mathcal{A}$.
Using the training data $\mathcal{T}$, Bob can apply the learning algorithm $\mathcal{A}$ to obtain a function $f$ such that
\begin{equation}
    \E_{\rho \sim D} |h(\rho) - \Tr(O U \rho U^\dagger)| \leq \epsilon.
\end{equation}
Using Markov's inequality, we have
\begin{equation}
    \mathrm{Pr}[ |h(\rho) - \Tr(O U \rho U^\dagger)| < 2 \epsilon ] > \frac{1}{2}. 
    \label{eq:probf}
\end{equation}
For all $x = 1, \ldots, d$, if $|h(\ket{x}\!\bra{x}) - \Tr(O U \ket{x}\!\bra{x} U^\dagger)| < 2 \epsilon$, we have $|h(\ket{x}\!\bra{x}) / 2 \epsilon - v_{x}| < 1$.
This means that if $h(\ket{x}\!\bra{x}) > 0$, then $v_x = 1$ and if $h(\ket{x}\!\bra{x}) < 0$, then $v_x = -1$.
Hence Bob can construct a bit-string $\tilde{v}$ given as $\tilde{v}_x = \mathrm{sign}(h(\ket{x}\!\bra{x})), \forall x = 1, \ldots, d$. Using Equation~\eqref{eq:probf}, we know that at least $d / 2$ bits in $\tilde{v}$ will be equal to $v$.

Because with high probability, $\tilde{v}$ and $v$ has at least $d / 2$ bits in common. Fano's inequality tells us that $I(v, \tilde{v}) \geq \Omega(d)$. Because the bit-string $\tilde{v}$ is constructed solely from the training data $\mathcal{T}$. Data processing inequality tells us that $I(v, \tilde{v}) \leq I (v, \mathcal{T})$.
Together with Equation~\eqref{eq:Iupper}, we have
\begin{equation}
    N \geq I (v, \mathcal{T}) \geq I(v, \tilde{v}) \geq \Omega(d).
\end{equation}
Recall that $d = B / 4\epsilon^2$, we have hence obtained the desired result $N \geq \Omega(B / \epsilon^2)$.
\end{proof}

\section{Limitations of quantum kernel methods}
\label{app:limitqk}

Even though the quantum kernel method saturates the fundamental lower bound $\Omega(\Tr(O^2) / \epsilon^2)$ and can be made formally equivalent to infinite depth quantum neural networks it has a number of limitations that hinder its practical applicability. In this section we construct a simple example where the overhead for using the quantum kernel method is exponential in comparison to trivial classical methods.  

Specifically, it has the limitation of closely following this lower bound for any unitary $U$ and observable $O$. This is not true for other machine learning methods, such as classical neural networks or projected quantum kernel methods.
It is possible for classical machine learning methods to learn quantum models with exponentially large $\Tr(O^2)$, which is not learnable by the quantum kernel method.
This can already be seen in the numerical experiments given in the main text.
In this section, we provide a simple example that allows theoretical analysis to illustrate this limitation.

We consider a simple learning task where the input vector $x \in \{0, \pi\}^n$.
The encoding of the input vector $x$ to the quantum state space is given as
\begin{equation}
    \ket{x} = \prod_{k=1}^n \exp(\mathrm{i} X_k x_k) \ket{0^n}.
\end{equation}
The quantum state $\ket{x}$ is a computational basis state. We define $\rho(x) = \ket{x}\!\bra{x}$.
The quantum model applies a unitary $U = I$, and measures the observable $O = I \otimes \ldots \otimes I \otimes Z$.
Hence $f(x) = \Tr(O \rho(x)) = (2 x_n - \pi)$.
Notice that for this very simple quantum model, the function $f(x)$ is an extremely simple linear model.
Hence a linear regression or a single-layer neural network can learn the function $f(x)$ from training data of size $n$ with high probability.

Despite being a very simple quantum model, the Frobenius norm of the observable $\Tr(O^2)$ is exponentially large, i.e., $\Tr(O^2) = 2^n$.
We now show that a quantum kernel method will need a training data of size $N \geq \Omega(2^n)$ to learn this simple function $f(x)$.
Suppose we have obtained a training set given as $\{(x_i, \Tr(O \rho(x_i))\}_{i=1}^N$ where each $x_i$ is selected uniformly at random from $\{0, \pi\}^n$. Recall from the analysis in Section~\ref{sec:MLmodels}, the function learned by the quantum kernel method will be
\begin{equation}
    h^{\mathrm{Q}}(x) = \min\left(1, \max\left(-1, \sum_{i=1}^N \sum_{j=1}^N \Tr(\rho(x_i) \rho(x)) ((K^{\mathrm{Q}}+\lambda I)^{-1})_{ij} \Tr(O \rho(x_j))\right)\right),
\end{equation}
where $K^Q_{ij} = k^{\mathrm{Q}}(x_i, x_j) = \Tr(\rho(x_i) \rho(x_j))$.
The main problem of the quantum kernel method comes from the precise definition of the kernel function $k(x_i, x) = \Tr(\rho(x_i) \rho(x))$.
For at least $2^n - N$ choices of $x$, we have $\Tr(\rho(x_i) \rho(x)) = 0, \forall i = 1, \ldots, N$.
This means that for at least $2^n - N$ choices of $x$, $h^{\mathrm{Q}}(x) = 0$.
However, by construction, $f(x) \in \{1, -1\}$.
Hence the prediction error can be lower bounded by
\begin{equation}
    \frac{1}{2^n} \sum_{x \in \{0, \pi\}^n} |h^{\mathrm{Q}}(x) - f(x)| \geq 1 - \frac{N}{2^n}.
\end{equation}
Therefore, if $N < (1 - \epsilon) 2^n$, then the prediction error will be greater than $\epsilon$. Hence we need a training set of size $N \geq (1 - \epsilon) 2^n$ to achieve a prediction error $\leq \epsilon$.

In general, when we place the classical vectors $x$ into an exponentially large quantum state space, the quantum kernel function $\Tr(\rho(x_i) \rho(x_j))$ will be exponentially close to zero for $x_i \neq x_j$. In this case $K^{\mathrm{Q}}$ will be close to the identity matrix, but $\Tr(\rho(x_i) \rho(x))$ will be exponentially small. For a training set of size $N \ll 2^n$, $h^{\mathrm{Q}}(x)$ will be exponentially close to zero similar to the above example.
Despite $h^{\mathrm{Q}}(x)$ being exponentially close to zero, if we can distinguish $> 0$ and $< 0$, then $h^{\mathrm{Q}}$ could still be useful in classification tasks.
However, due to the inherent quantum measurement error in evaluating the kernel function $\Tr(\rho(x_i) \rho(x_j))$ on a quantum computer, we will need an exponential number of measurements to resolve such an exponentially small difference.

\section{Projected quantum kernel methods}\label{app:pqk}

In the main text, we argue that projection back from the quantum space to a classical one in the projected quantum kernel can greatly improve the performance of such methods.  There we focused on the simple case of a squared exponential based on reduced 1-particle observables, however this idea is far more general.  In this section we explore some of these generalizations including a novel scheme for calculating functions of all powers of RDMs efficiently.

From discussions on the quantum kernel method, we have seen that using the native quantum state space to define the kernel function, e.g., $k(x_i, x_j) = \Tr(\rho(x_i) \rho(x_j))$ can fail to learn even a simple function when the full exponential quantum state space is being used.
We have to utilize the entire exponential quantum state space otherwise the quantum machine learning model could be simulated efficiently classically and a large advantage could not be found.
In this section, we will detail a set of solutions that project the quantum states back to approximate classical representations and define the kernel function using the classical representation. We refer to these modified quantum kernels as projected quantum kernels.
The projected quantum kernels are defined in a classical vector space to circumvent the hardness of learning due to the exponential dimension in quantum Hilbert space.
However, projected quantum kernels still use the exponentially large quantum Hilbert space for evaluation and can be hard to simulate classically.

Some simple choices based on reduced density matrices (RDMs) of the quantum state are given below.
\begin{enumerate}
    \item A linear kernel function using 1-RDMs
    \begin{align}
        Q_l^1(x_i, x_j) = \sum_k \text{Tr} \left[ \text{Tr}_{m \neq k} [\rho(x_i)] \text{Tr}_{n \neq k}[ \rho(x_j)] \right],
    \end{align}
    where $\Tr_m \neq k(\rho)$ is the partial trace of the quantum state $\rho$ over all qubits except for the $k$-th qubit. It could learn any observable that can be written as a sum of one-body terms.
    \item A Gaussian kernel function using 1-RDMs
    \begin{align} \label{eq:gau1rdm}
        Q_g^1(x_i, x_j) = \exp \left( -\gamma  \sum_k \left(\text{Tr}_{m \neq k}[\rho(x_i)] - \text{Tr}_{n \neq k}[\rho(x_j)] \right)^2 \right),
    \end{align}
    where $\gamma > 0$ is a hyper-parameter. It could learn any nonlinear function of the $1$-RDMs. 
    \item A linear kernel using $k-$RDMs
    \begin{align}
        Q_l^k(x_i, x_j) = \sum_{K \in S_k(n)}\text{Tr} \left[ \text{Tr}_{n \notin K}[\rho(x_i)] \text{Tr}_{m \notin K}[\rho(x_j)] \right]
    \end{align}
    where $S_k(n)$ is the set of subsets of $k$ qubits from $n$, $\text{Tr}_{n \notin K}$ is a partial trace over all qubits not in subset $K$. It could learn any observable that can be written as a sum of $k$-body terms.
\end{enumerate}
The above choices have a limited function class that they can learn, e.g., $Q_l^1$ can only learn observables that are sum of single-qubit observables.
It is desirable to define a kernel that can learn any quantum models (e.g., arbitrarily deep quantum neural networks) with sufficient amount of data similar to the original quantum kernel $k^{\mathrm{Q}}(x_i, x_j) = \Tr(\rho(x_i) \rho(x_j))$ as discussed in Appendix~\ref{sec:qk_qnn}.

We now define a projected quantum kernel that contains all orders of RDMs. Since all quantum models $f(x) = \Tr(O U \rho(x) U^\dagger)$ are linear functions of the full quantum state, this kernel can learn any quantum models with sufficient data.
A $k$-RDM of a quantum state $\rho(x)$ for qubit indices $(p_1, p_2, \ldots, p_k)$ can be reconstructed by local randomized measurements using the formalism of classical shadows \cite{Huang2020}:
\begin{equation}
    \rho^{(p_1, p_2, \ldots, p_k)}(x) = \mathbb{E}\left[ \otimes_{r=1}^k (3 \ket{s_{p_r}, b_{p_r}}\!\bra{s_{p_r}, b_{p_r}} - I) \right],
\end{equation}
where $b_{p_r}$ is a random Pauli measurement basis $X, Y, Z$ on the $p_r$-th qubit, and $s_{p_r}$ is the measurement outcome $\pm 1$ on the $p_r$-th qubit of the quantum state $\rho(x)$ under Pauli basis $b_{p_r}$.
The expectation is taken with respect to the randomized measurement on $\rho(x)$.
The inner product of two $k$-RDMs is equal to
\begin{equation}
\label{eq:innshadow}
    \mathrm{Tr}\left[\rho^{(p_1, p_2, \ldots, p_k)}(x_i) \rho^{(p_1, p_2, \ldots, p_k)}(x_j)]\right] = \mathbb{E} \left[ \Pi_{r=1}^k (9 \delta_{s^i_{p_r} s^j_{p_r}} \delta_{b^i_{p_r} b^j_{p_r}} - 4) \right],
\end{equation}
where we used the fact that the randomized measurement outcomes for $\rho(x_i)$ and $\rho(x_j)$ are independent.
We extend this equation to the case where some indices $p_r, p_s$ coincide. This would only introduce additional features in the feature map $\phi(x)$ that defines the kernel $k(x_i, x_j) = \phi(x_i)^\dagger \phi(x_j)$.
The sum of all possible $k$-RDMs can be written as
\begin{equation}
    Q^k(\rho(x_i), \rho(x_j)) = \sum_{p_1=1}^n \ldots \sum_{p_k=1}^n \mathrm{Tr}\left[\rho^{(p_1, p_2, \ldots, p_k)}(x_i) \rho^{(p_1, p_2, \ldots, p_k)}(x_j)]\right] = \mathbb{E} \left[\left( \sum_{p=1}^n (9 \delta_{s^i_{p} s^j_{p}} \delta_{b^i_{p} b^j_{p}} - 4) \right)^k \right],
\end{equation}
where we used Equation~\eqref{eq:innshadow} and linearity of expectation.
A kernel function that contains all orders of RDMs can be evaluated as
\begin{equation}
    Q^\infty_\gamma(\rho(x_i), \rho(x_j)) = \sum_{k=0}^\infty \frac{\gamma^k}{k! n^k} Q^k(\rho(x_i), \rho(x_j)) = \mathbb{E} \exp\left( \frac{\gamma}{n} \sum_{p=1}^n (9 \delta_{s^i_{p} s^j_{p}} \delta_{b^i_{p} b^j_{p}} - 4) \right),
\end{equation}
where $\gamma$ is a hyper-parameter.
The kernel function $Q^\infty_\gamma(\rho(x_i), \rho(x_j))$ can be computed by performing local randomized measurement on the quantum states $\rho(x_i)$ and $\rho(x_j)$ independently.
First, we collect a set of randomized measurement data for $\rho(x_i), \rho(x_j)$ independently:
\begin{align}
    \rho(x_i) &\rightarrow \{((s^{i, r}_1, b^{i, r}_1), \ldots , (s^{i, r}_n, b^{i, r}_n)), \forall r = 1, \ldots, N_s\},\\
    \rho(x_j) &\rightarrow \{((s^{j, r}_1, b^{j, r}_1), \ldots , (s^{j, r}_n, b^{j, r}_n)), \forall r = 1, \ldots, N_s\},
\end{align}
where $N_s$ is the number of repetition for each quantum state. For each repetition, we will randomly sample a Pauli basis for each qubit and measure that qubit to obtain an outcome $\pm1$. For the $r$-th repetition, the Pauli basis in the $k$-th qubit is given as $b^{i, r}_k$ and the measurement outcome $\pm 1$ is given as $s^{i, r}_k$.
Then we compute
\begin{equation}
    \frac{1}{N_s(N_s - 1)} \sum_{r_1 = 1}^{N_s} \sum_{\substack{r_2 = 1\\ r_2 \neq r_1}}^{N_s} \exp\left( \frac{\gamma}{n} \sum_{p=1}^n (9 \delta_{s^{i, r_1}_{p} s^{j, r_2}_{p}} \delta_{b^{i, r_1}_{p} b^{j, r_2}_{p}} - 4) \right) \approx Q^\infty_\gamma(\rho(x_i), \rho(x_j)).
\end{equation}
We reuse all pairs of data $r_1, r_2$ to reduce variance when estimating $Q^\infty_\gamma(\rho(x_i)$, since the resulting estimator would still be equal to the desired quantity in expectation. This technique is known as U-statistics, which is often used to create minimum-variance unbiased estimators. U-statistics is also applied in \cite{Huang2020} for estimating Renyi entanglement entropy with high accuracy.


\section{Simple and rigorous quantum advantage over classical machine learning models} \label{app:qadv}

In Ref.\cite{liu2020rigorous}, the authors proposed a machine learning problem based on discrete logarithm which is assumed to be hard for any classical machine learning algorithm, complementing existing work studying learnability in the context of discrete logs~\cite{servedio2004equivalences,sweke2020quantum}.  Much of the challenge in their construction \cite{sweke2020quantum} was related to technicalities involved in the original quantum kernel approach.  Here we present a simple quantum machine learning algorithm using the projected quantum kernel method.
The problem is defined as follows, where $p$ is an exponentially large prime number and $g$ is chosen such that computing $\log_g(x)$ in $\mathbb{Z}_p^*$ is classically hard and $\log_g(x)$ is one-to-one.

\begin{definition}[Discrete logarithm-based learning problem]
For all input $x \in \mathbb{Z}_p^*$, where $n = \lceil \log_2(p) \rceil$, the output is
\begin{equation}
    y(x) = \begin{cases} +1, & \log_g(x) \in [s, s+\frac{p-3}{2}], \\ -1, & \log_g(x) \notin [s, s+\frac{p-3}{2}], \end{cases}
\end{equation}
for some $s \in \mathbb{Z}_p^*$. The goal is to predict $y(x)$ for an input $x$ sampled uniformly from $\mathbb{Z}_p^*$.
\end{definition}

Let us consider the most straight-forward feature mapping that maps the classical input $x$ into the quantum state space $\ket{\log_g(x)}$ using Shor's algorithm for computing discrete logarithms \cite{nielsen2001quantum}.

Training the original quantum kernel method using this feature mapping will require training data $\{(x_i, y_i)\}_{i=1}^N$ with $N$ being exponentially large to yield a small prediction error.
This is because for a new $x \in \mathbb{Z}_p^*$, such that $\log_g(x) \neq \log_g(x_i), \forall i = 1, \ldots, N$, quantum kernel method will be equivalent to random guessing.
Hence the quantum kernel method has to see most of the values in the range of $\log_g(x)$ ($\mathbb{Z}_p^*$) to make accurate predictions. This is the same as the example to demonstrate the limitation of quantum kernel methods in Appendix~\ref{app:limitqk}.
Since $\mathbb{Z}_p^*$ is exponentially large, the quantum kernel method has to use an exponentially amount number of data $N$ for this straight-forward feature map.
The central problem is that all the inputs $x$ are maximally far apart from one another, and this impedes the ability for quantum kernel methods to generalize.

On the other hand, we can project the quantum feature map $\ket{\log_g(x)}$ back to a classical space, which is now just a number $\log_g(x) \in \mathbb{Z}_p^*$.
Recall that $\mathbb{Z}_p^*$ contains all number from $0, \ldots, p-1$, thus we consider mapping $x$ to a real number $z = \log_g(x) / p \in [0, 1)$.
Let us define $t = s / p$.
In this projected space, we are learning a simple classification problem where $y(z) = +1$ if $z \in [t, t + \frac{p-3}{2p}]$, and $y(z) = -1$ if $z \notin [t, t + \frac{p-3}{2p}]$.
We are using a periodic boundary where $0$ and $1$ are the same point.
If $t + \frac{p-3}{2p} < 1$, then there exists some $a, b \in [0, 1)$ and $a < b$, such that $y(z) = +1$, if $a \leq z \leq b$, and $y(z) = -1$, otherwise.
In this case we have $y(z) = \mathrm{sign}((b-z)(z-a))$, where $\mathrm{sign}(t) = +1$ if $t \geq 0$, otherwise $\mathrm{sign}(t) = -1$.
If $t + \frac{p-3}{2p} \geq 1$, then there exists some $a, b \in [0, 1)$ and $a < b$, such that $y(z) = -1$, if $a \leq z \leq b$, and $y(z) = +1$, otherwise.
In this case we have $y(z) = \mathrm{sign}((a-z)(z-b))$.
Through this analysis, we can see that we only need to learn a simple quadratic function to perform accurate classification.
Hence one could simply define a projected quantum kernel as
\begin{equation}
    k^{\mathrm{PQ}}(x_i, x_j) = \left((\log_g(x_i) / p) (\log_g(x_j) / p) + 1\right)^2,
\end{equation}
where the division in $(\log_g(x_i) / p)$ is performed as real number in $\mathbb{R}$.
This projected quantum kernel can efficiently learn any quadratic function $az^2 + bz + c$ with $z = \log_g(x_i)/p$, hence solving the above learning problem.

\begin{theorem}[Corollary 3.19 in \cite{mohri2018foundations}] \label{thm:VCdim}
Let $\mathcal{H}$ be a class of functions taking values in $\{+1, -1\}$ with VC-dimension $d$. Then with probability $\geq 1 - \delta$ over the sampling of $z_1, \ldots z_N$ from some distribution $\mathcal{D}$, we have
\begin{equation}
    \mathop{\mathbb{E}}_{z \sim \mathcal{D}} I[h(z) \neq y(z)] \leq \frac{1}{N} \sum_{i=1}^N I[h(z_i) \neq y(z_i)] + \sqrt{\frac{2d \log(\mathrm{e}N / d)}{N}} + \sqrt{\frac{\log(1 / \delta)}{N}}, \label{eq:vcdimgen}
\end{equation}
for all $h \in \mathcal{H}$, where $I[\text{Statement}] = 1$ if $\text{Statement}$ is true, otherwise $I[\text{Statement}] = 0$.
\end{theorem}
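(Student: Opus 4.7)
The plan is to prove this uniform generalization bound by composing three classical ingredients: a symmetrization/concentration argument that upgrades empirical averages to uniform deviations via Rademacher complexity, Massart's finite-class lemma that bounds the Rademacher average of a finite set of vectors, and the Sauer--Shelah lemma that controls the growth function of $\mathcal{H}$ through its VC-dimension $d$.

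First, I would consider the 0-1 loss class $\mathcal{L} = \{ \ell_h : z \mapsto I[h(z) \neq y(z)] : h \in \mathcal{H}\}$, which maps into $[0,1]$. Applying Theorem~\ref{thm:rademacher} (already used and cited from Mohri \emph{et al.} in Appendix~\ref{sec:proofmain}) to $\mathcal{L}$, with probability at least $1-\delta$ the quantity $\mathbb{E}_z \ell_h(z) - \tfrac{1}{N}\sum_i \ell_h(z_i)$ is bounded uniformly in $h$ by $2\,\mathbb{E}_\sigma\bigl[\sup_{h \in \mathcal{H}} \tfrac{1}{N}\sum_i \sigma_i \ell_h(z_i)\bigr] + 3\sqrt{\log(2/\delta)/(2N)}$. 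Since $\ell_h(z) = (1 - h(z)y(z))/2$ is an affine function of $h(z)$ with slope $\pm 1/2$, the Rademacher average of $\mathcal{L}$ is bounded by that of $\mathcal{H}$ itself (the sign of $y(z_i)$ can be absorbed into the Rademacher variable $\sigma_i$, which preserves its distribution).

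Second, I would bound $\mathbb{E}_\sigma\bigl[\sup_{h\in\mathcal{H}} \tfrac{1}{N}\sum_i \sigma_i h(z_i)\bigr]$ by exploiting the fact that, conditioned on $z_1,\ldots,z_N$, the number of distinct sign vectors $(h(z_1),\ldots,h(z_N))$ as $h$ varies over $\mathcal{H}$ equals the growth function $\Pi_{\mathcal{H}}(N)$. By the Sauer--Shelah lemma, $\Pi_{\mathcal{H}}(N) \leq (eN/d)^d$ whenever $N \geq d$. The supremum then reduces to a maximum over a finite set of vectors in $\{\pm 1\}^N$, each of Euclidean norm $\sqrt{N}$. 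Massart's lemma gives the bound
\begin{equation}
\mathbb{E}_\sigma\!\left[\sup_{h\in\mathcal{H}} \frac{1}{N}\sum_{i=1}^N \sigma_i h(z_i)\right] \leq \frac{\sqrt{N}\sqrt{2 \log \Pi_{\mathcal{H}}(N)}}{N} \leq \sqrt{\frac{2 d \log(eN/d)}{N}}.
\end{equation}

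Finally, substituting this estimate into the Rademacher bound yields the stated inequality (after absorbing the factor of $2$ and combining constants into the final $\sqrt{\log(1/\delta)/N}$ term, possibly by tightening the concentration step with a direct McDiarmid application to $\sup_{h\in\mathcal H} |\mathbb{E}\ell_h - \widehat{\mathbb{E}}\ell_h|$ rather than going through Theorem~\ref{thm:rademacher}'s generic form). I expect the main obstacle to be bookkeeping the constants to exactly match the stated bound: the factor $\sqrt{2}$ inside the VC term is sharp only if one carefully chains symmetrization, Massart, and Sauer--Shelah without incurring a factor of $2$ from the loss-to-hypothesis reduction. For the purposes of the application in Appendix~\ref{app:qadv}, any bound of the form $\widetilde{\mathcal{O}}(\sqrt{d/N})$ would suffice, so a small constant slack in this derivation does not affect the downstream quantum advantage claim.
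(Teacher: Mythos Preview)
The paper does not prove this statement at all: it is quoted verbatim as Corollary~3.19 of Mohri \emph{et al.}\ and used as a black box in the discrete-logarithm argument. Your sketch is the standard textbook derivation (symmetrization $\to$ Rademacher complexity, Massart's lemma on the finite set of sign patterns, Sauer--Shelah to bound the growth function), which is precisely how the cited reference obtains it, so your proposal is correct and in fact supplies more than the paper itself does. Your caveat about the constants is apt: matching the exact coefficients in the display requires handling the $1/2$ from $\ell_h = (1-hy)/2$ carefully so that the factor of $2$ from Theorem~\ref{thm:rademacher} cancels, and using McDiarmid directly rather than the generic Rademacher-plus-confidence split to get $\sqrt{\log(1/\delta)/N}$ instead of $3\sqrt{\log(2/\delta)/(2N)}$; as you note, none of this affects the application in Appendix~\ref{app:qadv}.
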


A simple and rigorous statement could be made by noticing that the VC-dimension \cite{blumer1989learnability, mohri2018foundations} for the function class $\{ \mathrm{sign}(az^2 + bz + c) | a, b, c \in \mathbb{R}\}$ is $3$. Let us apply Theorem~\ref{thm:VCdim} with
\begin{equation}
 z = \log_g(x) / p \,\,\, \mathrm{ and } \,\,\, \mathcal{H} = \{\mathrm{sign}(az^2 + bz + c) | a, b, c \in \mathbb{R}\}.
\end{equation}
This theorem bounds the prediction error for new inputs $z$ coming from the same distribution as how the training data is sampled.
For a given set of training data $(z_i, y(z_i))_{i=1}^N$, we perform a minimization over $a, b, c \in \mathbb{R}$ such that the training error $\frac{1}{N} \sum_{i=1}^N I[h(z_i) \neq y(z_i)]$ is zero. This can be achieved by applying a standard support vector machine algorithm \cite{chang2011libsvm} using the above kernel $k^{\mathrm{PQ}}$, because $y(z_i) \in \mathcal{H}$, so one can always fit the training data perfectly. Using Eq.~\eqref{eq:vcdimgen} with $\delta = 0.01$, we can provide a prediction error bound for the trained projected quantum kernel method
\begin{equation}
f_*(x) = h_*(\log_g(x) / p) = h_*(z) = \mathrm{sign}(a_*z^2 + b_*z + c_*).
\end{equation}
Because we fit the training data perfectly, we have
\begin{equation}
    \frac{1}{N} \sum_{i=1}^N I[h_*(z_i) \neq y(z_i)] = 0.
\end{equation}
With probability at least $0.99$, a projected quantum kernel method $f_*(x) = h_*(\log_g(x) / p)$ that perfectly fit a data set of size $N = \mathcal{O}(\log(1 / \epsilon) / \epsilon^2)$ has a prediction error
\begin{equation}
    \mathop{\mathbb{P}}_{x \sim \mathbb{Z}_p^*} [f(x) \neq y(x)] \leq \epsilon.
\end{equation}
This concludes the proof showing that the discrete logarithm-based learning problem can be solved with a projected quantum kernel method using a sample complexity independent of the input size $n$.

Despite the limitations of the quantum kernel method, the authors in \cite{liu2020rigorous} have shown that a clever choice of feature mapping $x \rightarrow \rho(x)$ would also allow quantum kernels $\Tr(\rho(x_i) \rho(x_j))$ to predict well in this learning problem. 

\section{Details of numerical studies}
\label{app:detailnum}

Here we give the complete details for the numerical studies presented in the main text. For the input distribution, we focused on the fashion MNIST dataset \cite{xiao2017fashion}. We use principal component analysis (PCA) provided by scikit-learn \cite{sklearn_api} to map each image ($28 \times 28$ grayscale) into classical vectors $x_i \in \mathbb{R}^n$, where $n$ is the number of principal components.
After PCA, we normalize the vectors $x_i$ such that each dimension is centered at zero and the standard deviation is one.
Finally, we sub-sample $800$ data points from the dataset without replacement.

\subsection{Embedding classical data into quantum states}
\label{sec:embeddingdetail}

The three approaches for embedding classical vectors $x_i \in \mathbb{R}^n$ into quantum states $\ket{x_i}$ are given below. 
\begin{itemize}
    \item \textbf{E1}: Separable encoding or qubit rotation circuit. This is a common choice in literature, e.g., see \cite{schuld2019quantum, skolik2020layerwise}.
    \begin{equation}
        \ket{x_i} = \bigotimes_{j=1}^n \mathrm{e}^{-\mathrm{i} X_j x_{ij}} \ket{0^n},
    \end{equation}
    where $x_{ij}$ is the $j$-th entry of the $n$-dim. vector $x_i$, $X_j$ is the Pauli-X operator acting on the $j$-th qubit.
    \item \textbf{E2}: IQP-style encoding circuit. This is an embedding proposed in \cite{havlivcek2019supervised} that suggests a quantum advantage.
    \begin{equation}
        \ket{x_i} = U_Z(x_i) H^{\otimes n} U_Z(x_i) H^{\otimes n} \ket{0^n},
    \end{equation}
    where $H^{\otimes n}$ is the unitary that applies Hadamard gates on all qubits in parallel, and
    \begin{equation}
        U_Z(x_i) = \exp\left( \sum_{j=1}^n x_{ij} Z_j + \sum_{j=1}^n \sum_{j'=1}^n x_{ij} x_{ij'} Z_j Z_{j'} \right),
    \end{equation}
    with $Z_j$ defined as the Pauli-Z operator acting on the $j$-th qubit.
    In the original proposal \cite{havlivcek2019supervised}, $x \in [0, 2 \pi]^n$,
    and they used $U_Z(x_i) = \exp\left( \sum_{j=1}^n x_{ij} Z_j + \sum_{j=1}^n \sum_{j'=1}^n (\pi - x_{ij})(\pi - x_{ij'}) Z_j Z_{j'} \right)$ instead.
    Here, due to the data pre-processing steps, $x$ will be centered around $0$ with a standard deviation of $1$, hence we made the equivalent changes to the definition of $U_Z(x_i)$.
    \item \textbf{E3}: A Hamiltonian evolution ansatz. This ansatz has been explored in the literature \cite{wecker2015progress, cade2019strategies, wiersema2020exploring} for quantum many-body problems. We consider a Trotter formula with $T$ Trotter steps (we choose $T = 20$) for evolving an 1D-Heisenberg model with interactions given by the classical vector $x_i$ for a time $t$ proportional to the system size (we choose $t = n / 3$).
    \begin{equation}
        \ket{x_i} = \left( \prod_{j=1}^n \exp\left(-\mathrm{i} \frac{t}{T} x_{ij} \left(X_j X_{j+1} + Y_j Y_{j+1} + Z_j Z_{j+1}\right)\right) \right)^T \bigotimes_{j=1}^{n+1} \ket{\psi_j},
    \end{equation}
    where $X_j, Y_j, Z_j$ are the Pauli operators for the $j$-th qubit and $\ket{\psi_j}$ is a Haar-random single-qubit quantum state.
    We sample and fix the Haar-random quantum states $\ket{\psi_j}$ for every qubit.
\end{itemize}

\subsection{Definition of original and projected quantum kernels}

We use Tensorflow-Quantum \cite{broughton2020tensorflow} for implementing the original/projected quantum kernel methods. This is done by performing quantum circuit simulation for the above embeddings and computing the kernel function $k(x_i, x_j)$.
For quantum kernel, we store the quantum states $\ket{x_i}$ as explicit amplitude vectors and compute the squared inner product 
\begin{equation}
    k^{\text{Q}}(x_i, x_j) = |\braket{x_i}{x_j}|^2.
\end{equation}
On actual quantum computers, we obtain the quantum kernel by measuring the expectation of the observable $\ket{0^n}\!\bra{0^n}$ on the quantum state $U_{\mathrm{emb}}(x_j)^\dagger U_{\mathrm{emb}}(x_i) \ket{0^n}$.
For projected quantum kernel, we use the kernel function
\begin{align} \label{eq:PQKnumerics}
    k^{\text{PQ}}(x_i, x_j) = \exp \left(-\gamma  \sum_k \sum_{P \in \{X, Y, Z\}} \left(\Tr(P \rho(x_i)_k) - \Tr( P \rho(x_j)_k)\right)^2 \right),
\end{align}
where $P$ is a Pauli matrix and $\gamma > 0$ is a hyper-parameter chosen to maximize prediction accuracy. We compute the kernel matrix $K \in \mathbb{R}^{N \times N}$ with $K_{ij} = k(x_i, x_j)$ using the sub-sampled dataset with $N=800$ for both the original/projected quantum kernel.

\subsection{Dimension and geometric difference}

Following the discussion in Appendix~\ref{app:dim-detail},
the approximate dimension of the original/projected quantum space is computed by
\begin{equation}
    \sum_{k=1}^N \left(\frac{1}{N-k} \sum_{l=k}^N t_l\right),
\end{equation}
where $N = 800$ and $t_1 \geq t_2 \geq \ldots \geq t_N$ are the singular values of the kernel matrix $K \in \mathbb{R}^{N \times N}$.
Based on the discussion in Appendix~\ref{app:geo-detail},
we report the minimum geometric difference $g$ of the original/projected quantum space (we refer to both the original/projected quantum kernel matrix as $K^{\text{P/Q}}$)
\begin{equation}
    g_{\mathrm{gen}} = \sqrt{\norm{\sqrt{K^{\text{P/Q}}} \sqrt{K^{\mathrm{C}}} \left(K^{\mathrm{C}} + \lambda I\right)^{-2} \sqrt{K^{\mathrm{C}}} \sqrt{K^{\text{P/Q}}}}_\infty},
\end{equation}
under a condition for having a small training error
\begin{equation}
    g_{\mathrm{tra}} = \lambda \sqrt{\norm{\sqrt{K^{\text{P/Q}}} (K^{\mathrm{C}} + \lambda I)^{-2} \sqrt{K^{\text{P/Q}}}}_\infty} < 0.045.
\end{equation}
The actual value of $g$ will depend on the list of choices for $\lambda$ and classical kernels $K^{\mathrm{C}}$.
We consider the following list of $\lambda$
\begin{equation}
    \lambda \in \{0.00001, 0.0001, 0.001, 0.01, 0.025, 0.05, 0.1\},
\end{equation}
and classical kernel matrix $K^{\mathrm{C}}$ being the linear kernel $k^l(x_i, x_j) = x_i^\dagger x_j$ or the Gaussian kernel $k^\gamma(x_i, x_j) = \exp(-\gamma \norm{x_i - x_j}^2)$ with hyper-parameter $\gamma$ from the list
\begin{equation}
    \gamma \in \{0.25, 0.5, 1.0, 2.0, 4.0, 8.0, 16.0, 32.0, 64.0\} / (n \Var[x_{ik}]) 
\end{equation}
for estimating the minimum geometric difference.
$\Var[x_{ik}]$ is the variance of all the coordinates $k=1, \ldots, n$ from all the data points $x_1, \ldots, x_N$.
One could add more choices of regularization parameters $\lambda$ or classical kernel functions, such as using polynomial kernels or neural tangent kernels, which are equivalent to training neural networks with large hidden layers (a package, called Neural Tangents \cite{neuraltangents2020}, is available for use).
This will provide a smaller geometric difference with the quantum state space, but all theoretical predictions remain unchanged.

\subsection{Datasets}

We include a variety of classical and quantum data sets.

\begin{enumerate}
    \item \textbf{Dataset (C)}: For the original classical image recognition data set, i.e., Dataset (C) in Figure~\ref{fig:DimensionAccuracy}(b), we choose two classes, dresses (class $3$) and shirts (class $6$), to form a binary classification task. The prediction error (between 0.0 and 1.0) is equal to the portion of data that are incorrectly labeled.
    \item \textbf{Dataset (Q, E1/E2/E3)}: For the quantum data sets in Figure~\ref{fig:DimensionAccuracy}(b), we consider the following quantum neural network
    \begin{equation}
        U_{\mathrm{QNN}} = \left( \prod_{j=1}^n \exp\left(-\mathrm{i} \frac{t}{T} J_{j} \left(X_j X_{j+1} + Y_j Y_{j+1} + Z_j Z_{j+1}\right)\right) \right)^T,
    \end{equation}
    where we choose $T = t = 10$ and $J_{j} \in \mathbb{R}$ are randomly sampled from the Gaussian distribution with mean $0$ and standard deviation $1$.
    We measure $Z_1$ after the quantum neural network, hence the resulting function is
    \begin{equation}
        f(x) = \Tr(Z_1 U_{\mathrm{QNN}} \ket{x}\!\bra{x} U_{\mathrm{QNN}}^\dagger).
    \end{equation}
    The mapping from $x$ to $\ket{x}$ depends on the feature embedding (E1, E2, or E3) discussed in Section~\ref{sec:embeddingdetail}. A different embedding $\ket{x}$ corresponds to a different funtion $f(x)$, and hence would result in a different dataset. The prediction error for these datasets are the average absolute error with $f(x)$.
    \item \textbf{Engineered datasets}: In Figure~\ref{fig:QuantumError}, we consider datasets that are engineered to saturate the potential of a quantum ML model. Given the choice of classical kernel $K^{\mathrm{C}}$ that has the smallest geometric difference $g$ with a quantum ML model $K^{\mathrm{Q}}$, we can create a data set that saturates $s_{\mathrm{C}} = g^2 s_{\mathrm{Q}}$ following the procedure in Appendix~\ref{app:engdata}. In particular, we construct the dataset such that $s_{\mathrm{Q}}=1$ and $s_{\mathrm{C}} = g^2$. We compute the eigenvector $v$ corresponding to the maximum eigenvalue of
    \begin{equation}
        \sqrt{K^{\mathrm{Q}}} \sqrt{K^{\mathrm{C}}} \left(K^{\mathrm{C}} + \lambda I\right)^{-2} \sqrt{K^{\mathrm{C}}} \sqrt{K^{\mathrm{Q}}}
    \end{equation}
    and construct $y' = \sqrt{K^{\mathrm{Q}}} v \in \mathbb{R}^N$. $y'_i$ corresponds to a real number for data point $x_i$.
    Finally we define the label of input data point $x_i$ as
    \begin{equation}
        y_i = \begin{cases} \mathrm{sign}(y'_i), & \text{with prob. } 0.9, \\ \text{random} \pm 1, & \text{with prob. } 0.1. \end{cases}
    \end{equation}
    This data set will show the maximal separation between quantum and classical ML model.
    The plots in Figure~\ref{fig:QuantumError} uses engineered datasets generated by saturating the geometric difference of classical ML models and quantum ML models based on projected quantum kernels in Equation~\eqref{eq:PQKnumerics} under different embeddings (E1, E2, and E3).
    In Figure~\ref{fig:QuantumError-QK}, we show the results for quantum ML models based on the original quantum kernels.
\end{enumerate}

\begin{figure*}[t]
\centering
\includegraphics[width=0.96\textwidth]{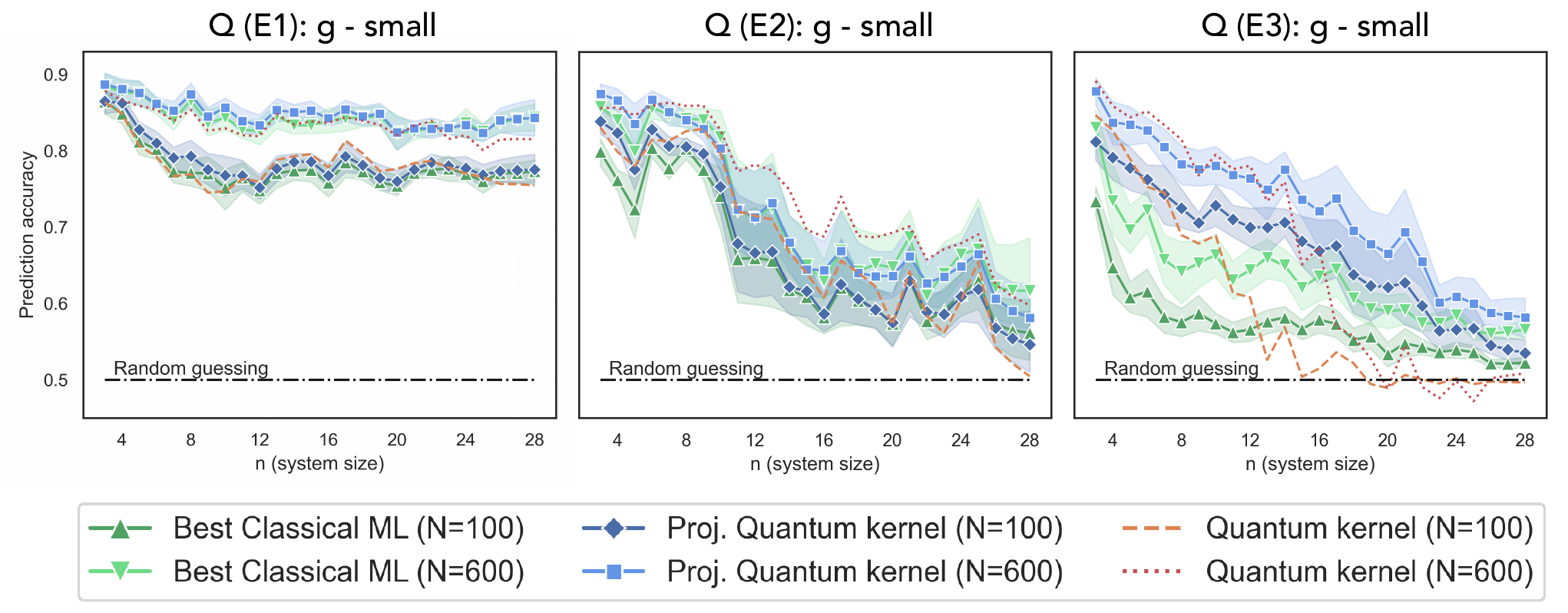}
    \caption{Prediction accuracy (higher the better) on engineered data sets. A label function is engineered to match the geometric difference $g(\mathrm{C} || \mathrm{QK})$ between the original quantum kernel and classical approaches. No substantial advantage is found using quantum kernel methods at large system size due to the small geometric difference $g(\mathrm{C} || \mathrm{QK})$. We consider the best performing classical ML models among Gaussian SVM, linear SVM, Adaboost, random forest, neural networks, and gradient boosting. \label{fig:QuantumError-QK}}
\end{figure*}

\subsection{Classical machine learning models}
\label{sec:CMLmodel}

We present the list of classical machine learning models that we compared with. We used scikit-learn \cite{sklearn_api} for training the classical ML models.
\begin{itemize}
    \item Neural network: We perform a grid search over two-layer feedforward neural networks with hidden layer size
    \begin{equation}
        h \in \{ 10, 25, 50, 75, 100, 125, 150, 200 \}.
    \end{equation}
    For classification, we use $\mathrm{MLPClassifier}$. For regression, we use $\mathrm{MLPRegressor}$.
    \item Linear kernel method: We perform a grid search over the regularization parameter
    \begin{equation}
        C \in \{0.006, 0.015, 0.03, 0.0625, 0.125, 0.25, 0.5, 1.0, 2.0, 4.0, 8.0, 16.0, 32.0, 64.0, 128.0, 256, 512, 1024\}.
    \end{equation}
    For classification, we use $\mathrm{SVC}$ with linear kernel. For regression, we choose the best between SVR and KernelRidge (both using linear kernel).
    \item Gaussian kernel method: We perform a grid search over the regularization parameter
    \begin{equation}
        C \in \{0.006, 0.015, 0.03, 0.0625, 0.125, 0.25, 0.5, 1.0, 2.0, 4.0, 8.0, 16.0, 32.0, 64.0, 128.0, 256, 512, 1024\}.
    \end{equation}
    and kernel hyper-parameter
    \begin{equation}
        \gamma \in \{0.25, 0.5, 1.0, 2.0, 3.0, 4.0, 5.0, 20.0\}  / (n \Var[x_{ik}]).
    \end{equation}
    $\Var[x_{ik}]$ is the variance of all the coordinates $k=1, \ldots, n$ from all the data points $x_1, \ldots, x_N$.
    For classification, we use $\mathrm{SVC}$ with RBF kernel (equivalent to Gaussian kernel). For regression, we choose the best between SVR and KernelRidge (both using RBF kernel).
    \item Random forest: We perform a grid search over the individual tree depth
    \begin{equation}
        \text{max\_depth} \in \{2, 3, 4, 5\},
    \end{equation}
    and number of trees
    \begin{equation}
        \text{n\_estimators} \in \{25, 50, 100, 200, 500\}.
    \end{equation}
    For classification, we use $\mathrm{RandomForestClassifier}$. For regression, we use $\mathrm{RandomForestRegressor}$.
    \item Gradient boosting: We perform a grid search over the individual tree depth
    \begin{equation}
        \text{max\_depth} \in \{2, 3, 4, 5\},
    \end{equation}
    and number of trees
    \begin{equation}
        \text{n\_estimators} \in \{25, 50, 100, 200, 500\}.
    \end{equation}
    For classification, we use $\mathrm{GradientBoostingClassifier}$. For regression, we use $\mathrm{GradientBoostingRegressor}$.
    \item Adaboost: We perform a grid search over the number of estimators
    \begin{equation}
        \text{n\_estimators} \in \{25, 50, 100, 200, 500\}.
    \end{equation}
    For classification, we use $\mathrm{AdaBoostClassifier}$. For regression, we use $\mathrm{AdaBoostRegressor}$.
\end{itemize}

\subsection{Quantum machine learning models}

For training quantum kernel methods, we use the kernel function $k^\mathrm{Q}(x_i, x_j) = \Tr(\rho(x_i) \rho(x_j))$. For classification, we use $\mathrm{SVC}$ with the quantum kernel. For regression, we choose the best between SVR and KernelRidge (both using the quantum kernel).
We perform a grid search over 
    \begin{equation}
        C \in \{0.006, 0.015, 0.03, 0.0625, 0.125, 0.25, 0.5, 1.0, 2.0, 4.0, 8.0, 16.0, 32.0, 64.0, 128.0, 256, 512, 1024\}.
    \end{equation}
For training projected quantum kernel methods, we use the kernel function
\begin{align}
    k^{\text{PQ}}(x_i, x_j) = \exp \left(-\gamma  \sum_k \sum_{P \in \{X, Y, Z\}} \left(\Tr(P \rho(x_i)_k) - \Tr( P \rho(x_j)_k)\right)^2 \right),
\end{align}
where $P$ is a Pauli matrix.
For classification, we use $\mathrm{SVC}$ with the projected quantum kernel $k^{\text{PQ}}(x_i, x_j)$. For regression, we choose the best between SVR and KernelRidge (both using the projected quantum kernel $k^{\text{PQ}}(x_i, x_j)$).
We perform a grid search over 
\begin{equation}
        C \in \{0.006, 0.015, 0.03, 0.0625, 0.125, 0.25, 0.5, 1.0, 2.0, 4.0, 8.0, 16.0, 32.0, 64.0, 128.0, 256, 512, 1024\}.
\end{equation}
and kernel hyper-parameter
\begin{equation}
    \gamma \in \{0.25, 0.5, 1.0, 2.0, 3.0, 4.0, 5.0, 20.0\}  / (n \Var[\Tr(P \rho(x_i)_k)]).
\end{equation}
$\Var[\Tr(P \rho(x_i)_k)]$ is the variance of $\Tr(P \rho(x_i)_k)$ for all $P \in \{X, Y, Z\}$, all coordinates $k=1, \ldots, n$, and all data points $x_1, \ldots, x_N$.
We report the prediction performance under the best hyper-parameter for all classical and quantum machine learning models.

\section{Additional numerical experiments}
\label{app:additional}

\begin{figure*}[t]
\centering
\includegraphics[width=0.96\textwidth]{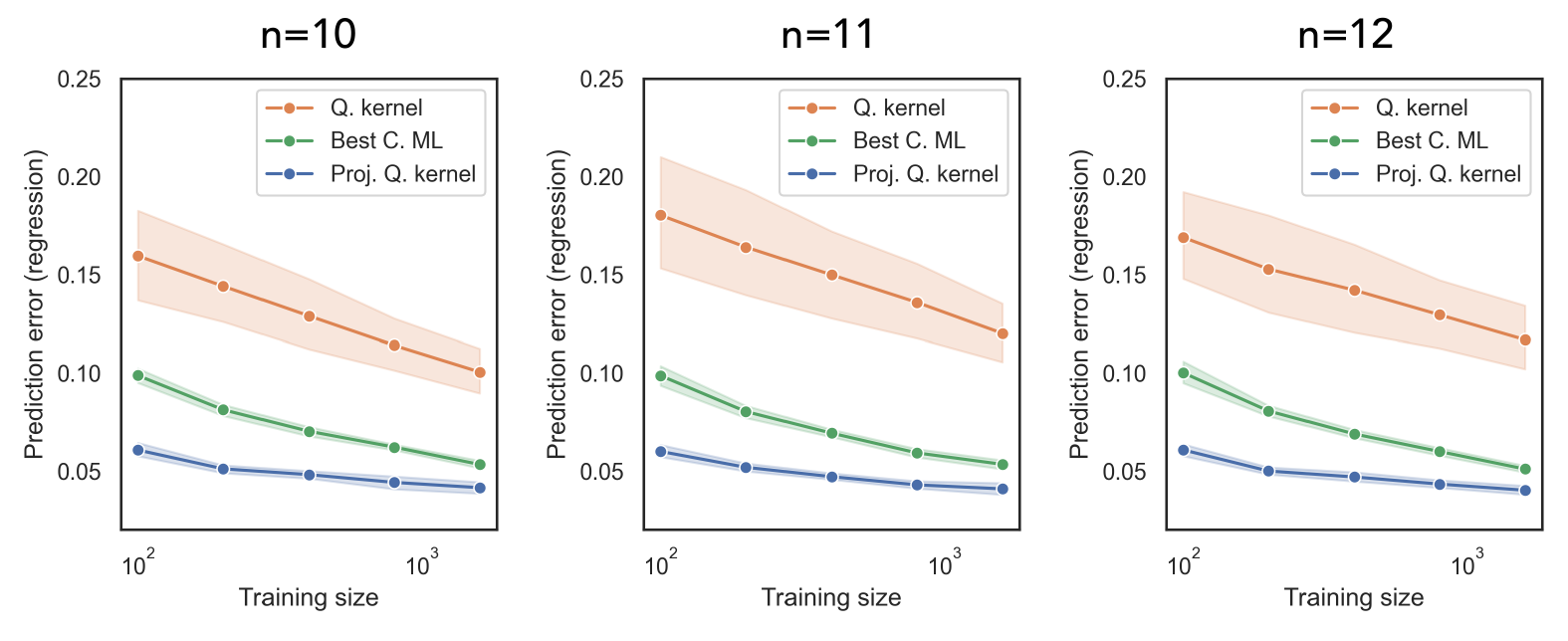}
    \caption{Prediction error (lower the better) on quantum data set (E2) over different training set size $N$. We can see that as the number of data increases, every model improves and the separation between them decreases. \label{fig:power-of-data}}
\end{figure*}

\begin{figure*}[t]
\centering
\includegraphics[width=0.96\textwidth]{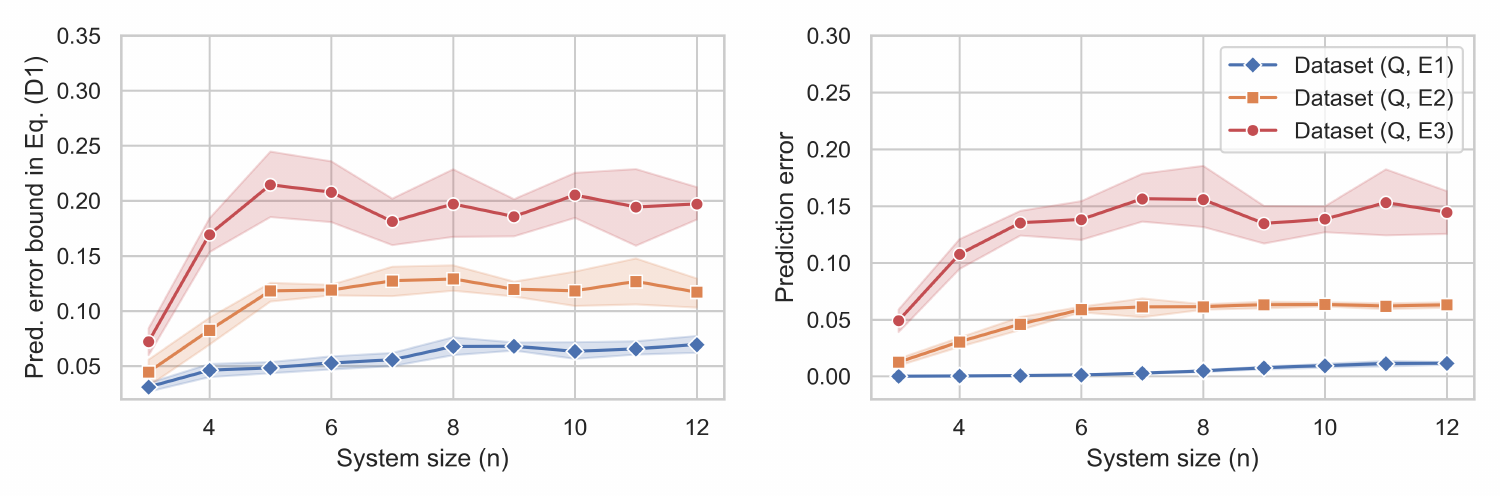}
    \caption{A comparison between the prediction error bound based on classical kernel methods (see Eq.~\eqref{eq:mainprederr}) and the prediction performance of the best classical ML model on the three quantum datasets. We consider the best performing classical ML models among Gaussian SVM, linear SVM, Adaboost, random forest, neural networks, and gradient boosting. While the prediction error bound is an upper bound to the actual prediction error, the trends are very similar (a large prediction error bound gives a large prediction error). \label{fig:S-vs-Pred}}
\end{figure*}

In the main text, we have presented engineered data sets to saturate the geometric inequality $s_{\mathrm{C}} \leq g(\mathrm{C} || \mathrm{PQ})^2 s_{\mathrm{PQ}}$ between classical ML and projected quantum kernel.
As an additional experiment to see if the same approach can work with the original quantum kernel method, we can create similar engineered data sets that saturate the geometric inequality between classical ML and quantum kernel
The result is given in Fig.~\ref{fig:QuantumError-QK}.
We can see that due to the large dimension $d$ and small geometric difference $g(\mathrm{C} || \mathrm{Q})$ between classical ML and quantum kernel at large system size, there are no obvious advantage even for this best-case scenario.
Interestingly, we see some advantage of projected quantum kernel over classical ML even when this data set is not constructed for projected quantum kernel.

In Fig.~\ref{fig:power-of-data}, we show the prediction performance for learning a quantum neural network under a wide range for the number of training data $N$.
We can see that there is a non-trivial advantage for small training size $N=100$ when comparing projected quantum kernel and the best classical ML model.
However, as training size $N$ increases, every model will improve and the prediction advantage will shrink.

In Fig.~\ref{fig:S-vs-Pred}, we compare the prediction error bound $s_K(N)$ for classical kernel methods and the prediction performance of the best classical ML model (including a variety of classical ML models in Section~\ref{sec:CMLmodel}).
To be more precise, we consider different classical kernel functions and different regularization parameter $\lambda$.
Then we compute
\begin{equation}
    s_{K, \lambda}(N) = \sqrt{ \frac{\lambda^2 \sum_{i=1}^N \sum_{j=1}^N ((K + \lambda I)^{-2})_{ij} y_i y_j}{N}} + \sqrt{\frac{\sum_{i=1}^N \sum_{j=1}^N ((K + \lambda I)^{-1} K (K + \lambda I)^{-1})_{ij} y_i y_j}{N}}.
\end{equation}
This is a generalization of $s_K(N)$ described in the main text, where we consider regularized classical kernel methods with a regularization parameter $\lambda$ to improve generalization performance (setting $\lambda = 0$ reduces to $s_K(N)$ given in the main text). See Section~\ref{sec:proofmain} for a detailed proof of an upper bound to the prediction error (note that the output label $y_i = \Tr(O^U \rho(x_i))$).
We can see that while the prediction error bound and the actual prediction error has a non-negligible gap, the two figures follow a similar trend.
When the prediction error bound is small, the prediction error of the best classical ML is also fairly small (and vice versa). It shows that $s_{K, \lambda}(N)$ is a good predictive metric for whether a classical ML model can learn to predict outputs from a quantum computation model.

\end{document}